\documentclass[journal]{IEEEtran}

\usepackage{amsmath,amssymb,amsfonts}
 
\usepackage{amsthm}
\usepackage{algorithm}
\usepackage{algorithmic}
\usepackage{graphicx}
\usepackage{mathtools}
\usepackage{hyperref}
\hypersetup{hidelinks}
\usepackage{textcomp}
\usepackage[style=ieee, sorting=nyt]{biblatex}
\usepackage{tikz}
\usetikzlibrary{positioning,arrows.meta,calc}
\graphicspath{{./figs/}}
\newtheorem{theorem}{Theorem}

\newtheorem{assumption}{Assumption}
\newtheorem{lemma}{Lemma}

\newtheorem{proposition}{Proposition}

\newtheorem{definition}{Definition}
\newtheorem{remark}{Remark}

\def\ddefloop#1{\ifx\ddefloop#1\else\ddef{#1}\expandafter\ddefloop\fi}
\def\ddef#1{\expandafter\def\csname fr#1\endcsname{\ensuremath{\mathfrak{#1} } } }
\ddefloop ABCDEFGHIJKLMNOPQRSTUVWXYZ\ddefloop
\def\ddef#1{\expandafter\def\csname bb#1\endcsname{\ensuremath{\mathbb{#1} } } }
\ddefloop ABCDEFGHIJKLMNOPQRSTUVWXYZ\ddefloop
\def\ddef#1{\expandafter\def\csname bf#1\endcsname{\ensuremath{\mathbf{#1} } } }
\ddefloop ABCDEFGHIJKLMNOPQRSTUVWXYZabcdefghijklmnopqrstuvwxyz\ddefloop
\def\ddef#1{\expandafter\def\csname bs#1\endcsname{\ensuremath{\boldsymbol{#1} } } }
\ddefloop ABCDEFGHIJKLMNOPQRSTUVWXYZabcdefghijklmnopqrstuvwxyz\ddefloop
\def\ddef#1{\expandafter\def\csname sf#1\endcsname{\ensuremath{\mathsf{#1} } } }
\ddefloop ABCDEFGHIJKLMNOPQRSTUVWXYZ\ddefloop
\def\ddef#1{\expandafter\def\csname c#1\endcsname{\ensuremath{\mathcal{#1} } } }
\ddefloop ABCDEFGHIJKLMNOPQRSTUVWXYZ\ddefloop
\def\ddef#1{\expandafter\def\csname h#1\endcsname{\ensuremath{\hat{#1} } } }
\ddefloop ABCDEFGHIJKLMNOPQRSTUVWXYZ\ddefloop
\def\ddef#1{\expandafter\def\csname dt#1\endcsname{\ensuremath{\dot{#1} } } }
\ddefloop ABCDEFGHIJKLMNOPQRSTUVWXYZ\ddefloop
\def\ddef#1{\expandafter\def\csname t#1\endcsname{\ensuremath{\tilde{#1} } } }
\ddefloop ABCDEFGHIJKLMNOPQRSTUVWXYZ\ddefloop

\newcommand{\T}{\intercal}

\DeclareMathOperator*{\minimize}{minimize}

\DeclareMathOperator*{\argmax}{argmax}

\DeclareMathOperator{\laspan}{\mathrm{span}}
\DeclareMathOperator{\larank}{rank}

\newcommand{\lie}{\mathrm{Lie}}
\newcommand{\rank}{\mathrm{rank}}

\def\BibTeX{{\rm B\kern-.05em{\sc i\kern-.025em b}\kern-.08em
T\kern-.1667em\lower.7ex\hbox{E}\kern-.125emX}}

\begin{document}

\title{A Controllability Perspective on Steering Follow-the-Regularized-Leader Learners in Games}
\author{Heling Zhang, Siqi Du, and Roy Dong
    \thanks{This work was supported by the National Science Foundation under Grant CCF 2236484.}
    \thanks{H. Zhang is with the Department of Electrical and Computer Engineering at Illinois Grainger Engineering, University of Illinois Urbana-Champaign. (email: hzhng120@illinois.edu).}
    \thanks{S. Du and R. Dong are with the Department of Industrial and Enterprise Systems Engineering at Illinois Grainger Engineering, University of Illinois Urbana-Champaign. (emails: \{siqidu3,roydong\}@illinois.edu).}
}

\maketitle

\begin{abstract}
Follow-the-regularized-leader (FTRL) algorithms have become popular in the context of games, providing easy-to-implement methods for each agent, as well as theoretical guarantees that the strategies of all agents will converge to some equilibrium concept (provided that all agents follow the appropriate dynamics). However, with these methods, each agent ignores the coupling in the game, and treats their payoff vectors as exogenously given. In this paper, we take the perspective of one agent (the controller) deciding their mixed strategies in a finite game, while one or more other agents update their mixed strategies according to continuous-time FTRL. Viewing the learners' dynamics as a nonlinear control system evolving on the relative interior of a simplex or product of simplices, we ask when the controller can steer the learners to a target state, using only its own mixed strategy and without modifying the game's payoff structure. 

For the two-player case we provide a necessary and sufficient criterion for controllability based on the existence of a fully mixed neutralizing controller strategy and a rank condition on the projected payoff map. For multi-learner interactions we give two sufficient controllability conditions, one based on uniform neutralization and one based on a periodic-drift hypothesis together with a Lie-algebra rank condition. We illustrate these results on canonical examples such as Rock-Paper-Scissors and a construction related to Brockett's integrator.
\end{abstract}
\begin{IEEEkeywords}
    Learning in games, controllability, nonlinear systems, game theory
\end{IEEEkeywords}

\section{Introduction}
Follow-the-regularized-leader (FTRL) algorithms have become standard techniques for handling complicated strategic interactions in multi-agent environments~\cite{cesa09, fudenberg22}. From a practical standpoint, there is a compelling case for deploying FTRL in real-world systems: it provides an easy-to-implement, computationally efficient update rule for each agent that relies strictly on locally observed payoff feedback~\cite{arora12, toonsi23}. Furthermore, it is supported by a rich theoretical literature ensuring that, provided all agents follow appropriate dynamics, the population's strategies will converge to established equilibrium concepts~\cite{foster97, hart00, cai25}. As a result, FTRL is frequently utilized to design autonomous agents navigating complex, repeated interactions.

However, this decentralized simplicity comes with a fundamental structural assumption. By employing FTRL, each agent inherently ignores the strategic coupling of the underlying game. Rather than recognizing the interaction as a closed-loop feedback system where their own actions influence the future behavior of others, the learner treats their incoming sequence of payoff vectors as exogenously given~\cite{papadimitriou19, toonsi25}. To an FTRL agent, the multi-agent environment is perceived merely as a fluctuating, open-loop landscape to be optimized against~\cite{cheung21}. If an entire population of interacting agents blindly adopts this uncoupled learning paradigm, it can introduce new systemic vulnerabilities and opportunities for strategic exploitation~\cite{braverman17, deng19}.

Suppose a single, sophisticated agent is aware of this behavioral structure. Recognizing that their opponents are predictably driven by FTRL updates, this model-aware agent no longer needs to myopically optimize their immediate payoff~\cite{lauffer22, mansour22}. Instead, they can actively shape the exogenous payoffs observed by the learners, treating the evolving mixed strategies of the population as a dynamical system to be manipulated~\cite{brown24}. The natural question then arises: where could they steer the system? 

The practical implications of such reachability are significant. For instance, in automated financial markets or algorithmic pricing, a strategic participant could manipulate FTRL-driven competitors to drive the market into profitable, out-of-equilibrium pricing configurations~\cite{braverman17}. Similarly, in intelligent infrastructure, an adversarial or central node could steer independent routing protocols to induce targeted congestion or enforce globally optimal traffic flows~\cite{zhang24}. 
These examples motivate a controllability viewpoint: when the
game and learning dynamics satisfy suitable conditions, a
model-aware strategic player may be able to steer the induced
learning dynamics toward selected states.

This paper studies this steering problem as a controllability problem. We consider a finite game with a distinguished controller and one or more learners that follow continuous-time FTRL. Interpreting the controller’s mixed strategy as the control input and the learners’ mixed strategies as the state yields a nonlinear control system evolving on the relative interior of the simplex, or a product of simplices if there are multiple learners. The restriction to the relative interior is based on the following consideration: under common learning flows (e.g., replicator dynamics), the boundary is invariant, so interior initial conditions cannot be driven to the boundary in finite time~\cite{weibull95}. 

Closest to our work are recent papers on steering learners in games, but the
control channel, objective, and analysis are different. In contrast to work that steers no-regret learners through external payments or dynamic incentives, we keep the game and payoff structure fixed and allow the controller to act only through its own mixed strategy. In contrast to repeated-game formulations that seek to drive a learner toward a Stackelberg outcome while learning unknown payoffs, we study a model-based controllability question for continuous-time FTRL dynamics on the relative interior of the simplex. Our contribution is a control-theoretic formulation of this steering problem, together with an exact two-player controllability criterion and two sufficient controllability conditions for multi-learner interactions based on neutralization, periodic drift, and Lie-algebraic rank tests.

The contributions of this paper are threefold. First, we formulate steering
continuous-time FTRL learners as a nonlinear controllability problem on the
relative interior of a simplex or product of simplices, with the controller's mixed strategy serving as the admissible control input. Second, in the two-player case we derive an exact controllability criterion based on a fully mixed
neutralizing strategy and a projected-payoff rank condition. Third, in the
multi-learner case we derive two sufficient controllability conditions, one via
uniform neutralization and one via periodic drift together with a Lie-algebra
rank condition. We illustrate our results on canonical examples such as Rock-Paper-Scissors and a construction related to Brockett’s integrator.

\section{Related Works}
\subsection{No-Regret Learning in Multi-Agent Settings}

In multi-agent settings, no-regret learning rules like FTRL do not directly consider the coupling between agents~\cite{toonsi23}. Rather than explicitly modeling their opponents, learners update their strategies based solely on their own historical payoff observations. They effectively treat the game's endogenous strategic interactions as exogenous environmental signals.

Even so, no-regret learning remains attractive because it is decentralized, requires only limited feedback, and gives each agent a clear long-run performance guarantee against the realized behavior of others. The main technical nuance is that these guarantees are typically about average behavior rather than the last iterate: time-averaged play can exhibit meaningful convergence properties even when the period-by-period strategies continue to move. This makes no-regret dynamics useful when long-run average performance is the relevant objective, but less satisfactory when one needs stable pointwise convergence of strategies.

To address this non-convergence, an active branch of algorithmic research modifies the learning rules to achieve last-iterate convergence, utilizing methods such as optimistic gradient extrapolation~\cite{tsuchiya24}, continuous action perturbations~\cite{abe22}, or black-box reductions~\cite{cai25}. However, these approaches require the ability to dictate the agents' internal algorithms. Our work considers an alternative setting: assuming the agents' standard FTRL dynamics are fixed, we investigate whether a single strategic player can steer the resulting system purely through their own actions.

\subsection{Strategizing Against Learners}

Because FTRL dynamics do not explicitly account for the closed-loop effects in a multi-agent setting, a large body of literature focuses on how to strategically exploit FTRL agents. This research usually frames the interaction as an asymmetric dynamic game, where a sophisticated optimizer adjusts its strategy to take advantage of a naive learner over time~\cite{mansour22, deng19}. The ultimate goal here is almost entirely payoff-oriented. Researchers frequently rely on Stackelberg formulations to compute commitment strategies that extract far more cumulative utility than a standard simultaneous-play Nash equilibrium would allow. Recent work also shows that this viewpoint faces strong computational barriers in general settings: unless P=NP, no polynomial-time optimizer can compute a near-optimal strategy against a learner running a standard no-regret algorithm such as multiplicative weights~\cite{assos25}.

While this work offers tight bounds on utility extraction and regret minimization, it treats the learner's evolving state simply as a vehicle for reaching a specific scalar payoff limit. This framework does not consider whether arbitrary coordinates on the non-Euclidean simplex can actually be reached, nor does it offer operational guidance for transient physical control. For instance, if a controller needs a population to temporarily adopt a specific, non-equilibrium mixed strategy just to avoid a catastrophic failure in a physical routing network, scalar payoff bounds are practically useless. By divorcing utility maximization from the mechanics of nonlinear state reachability, this literature largely ignores the underlying dynamics and state of the system.

\subsection{Steering via Mechanism Design}
Another active area of research investigates how central planners can guide learning populations toward stable or socially optimal states~\cite{canyakmaz24}. A common approach is to introduce dynamic incentives or side payments directly into the learners' utility functions to mitigate the oscillatory nature of FTRL dynamics~\cite{zhang24}.

Although incentive-based steering is effective, it requires a mediator with the administrative authority to actively modify the underlying payoff structure. In many decentralized or competitive settings, a standard player lacks the ability to unilaterally alter an opponent's utility function~\cite{manheim19}. Thus, while mechanism design demonstrates that a system can be guided by adjusting incentives, our work investigates whether learners can be steered to target states purely through strategic play within a rigidly fixed game topology.

\subsection{Game Dynamics and Geometric Control}
To analyze state reachability on the probability simplex, we utilize tools from geometric control theory. Specifically, we rely on the Chow-Rashevskii Theorem~\cite{chow39, rashevskii38} to establish local controllability, and we use~\cite[Theorem 1]{boscain21} to establish global controllability. For broader treatments of these standard tools, one can refer to~\cite{isidori85} or~\cite{agrachev04}.

The closest mathematical analogues to our approach appear in evolutionary biology and population dynamics. In those fields, continuous-time game equations like the replicator dynamics have been mapped onto the cotangent bundle of the probability simplex, allowing researchers to establish local controllability~\cite{raju20}. However, these prior studies typically model the control input as an exogenous environmental parameter or mutation rate~\cite{halder23}, rather than the actions of a single participant.

Our work adapts this methodology to matrix games by defining the control input strictly as the mixed strategy of a participating player. By introducing the concept of a neutralizing strategy to counteract the system's strong autonomous drift, we are able to map the algebraic tools of geometric control directly onto the primal manifold of the multi-agent interaction.

In summary, the existing literature influences learning dynamics in three main ways: by optimizing against learners for payoff, by modifying their incentives, or by introducing exogenous control signals. Our setting is different. We ask whether a standard participating player, without changing the learners' utility functions or internal algorithms, can use only its own mixed strategy to reach arbitrary interior states through the continuous-time FTRL dynamics. This perspective leads naturally to a control theoretic formulation of the problem and to corresponding controllability criteria.


\section{Notation}
 For any $N \in \bbN$, we use $[N]$ to represent $\{1, \dots, N\}$. Let $x_{1}, \dots, x_{n}$ be a sequence of $n$ objects; we denote by $x_{-i}$ the sequence with $x_{i}$ removed, i.e. $x_{-i} = (x_{1}, \dots, x_{i-1}, x_{i+1}, \dots, x_{n})$. Let $\Delta_{n} := \{ (x_{1}, \dots, x_{n}) | x_{i} \ge 0, \sum_{i=1}^{n} x_{i} = 1\}$ be the $n$-dimensional probability simplex. For a set $S$, we denote its \textbf{relative} interior by $S^{\circ}$, and its closure by $\overline{S}$. Let $\mathbf{1}_{n}$ be the $n$-dimensional all-ones vector, i.e. $\mathbf{1} = (1, 1, \dots, 1) \in \bbR^{n}$. We omit subscripts and write $\mathbf{1}$ when the dimension can be inferred from the context. We denote the concatenation of vectors $x \in \bbR^{n}$ and $y \in \bbR^{m}$ by $x\oplus y$. For a function $f: \cX \to \cY$ and a set $S \subseteq \cX$, we use $f|_{S}$ to represent the restriction of $f$ to $S$. 

For a vector field $g(x)$, we use $e^{tg}(x_0)$ to denote the time-$t$ flow of
$g$ starting from $x_0$. For a set of vector fields $\cG(x) = \{g_{1}(x), \dots, g_{n}(x)\}$, we denote the Lie algebra generated by $\cG(x)$ as $\lie(\cG)$. These notations will be further elaborated in the next section when we talk about geometric control.

For manifolds \(M,N\) and \(k \in \{0,1,\infty\}\), we write \(C^{k}(M,N)\) for the space of $k$ times continuously differentiable functions from $M$ to $N$. When \(N=\bbR\), we abbreviate this as \(C^{k}(M)\). For \(q \in M\), \(T_{q}M\) denotes the tangent space of \(M\) at \(q\). If \(X \in C^{1}(M,N)\), then \(DX(q):T_{q}M \to T_{X(q)}N\) denotes the differential of \(X\) at \(q\).

\vspace{2cm}

\section{Preliminaries}
\label{sec:preliminaries}
\subsection{Finite Games}
In this paper, we focus on finite games, i.e., normal-form games with finite action spaces. Formally, such games are defined as follows.
\begin{definition}[Finite Games]
    \label{def:finite-game}
    A finite game consists of the following:
    \begin{enumerate}
        \item a finite set of players indexed by $[N]$,
        \item for each player $i$, a finite set of actions $[n_{i}]$,
        \item for each player $i$, a utility function $r_i: [n_{1}] \times \dots \times [n_{N}] \to \bbR$ that quantifies the player's preference over the joint action profile of all players.
    \end{enumerate}
\end{definition}

The above definition naturally extends to mixed strategies. A mixed strategy of player $i$ is a vector $x_i \in \Delta_{n_i}$ representing a probability distribution over $[n_{i}]$, where $x_i(a)$ denotes the probability assigned to action $a \in [n_{i}]$. For convenience, let $\cA_i := [n_i]$ and let $\cA_{-i}$ denote the joint action set of all players other than $i$. Given a mixed strategy profile $(x_1, \dots, x_N) \in \Delta_{n_1} \times \dots \times \Delta_{n_N}$, the expected utility of player $i$ is
\begin{equation*}
    \begin{aligned}
        R(x_i, x_{-i}) 
        &= \sum_{a_i \in \cA_i} x_i(a_i) \sum_{a_{-i} \in \cA_{-i}} r_i(a_i, a_{-i}) \prod_{j \in [N] \setminus \{i\}} x_j(a_j) \\
        &= \langle x_i, p_i \rangle,
    \end{aligned}
\end{equation*}
where 
\begin{equation*}
    p_i = 
    \begin{pmatrix}
        \sum_{a_{-1} \in \cA_{-1}}
        r_i(1, a_{-i}) \prod_{j \in [N] \setminus \{i\}} x_j(a_j) \\
        \vdots \\
        \sum_{a_{-1} \in \cA_{-1}}
        r_i(n_i, a_{-i}) \prod_{j \in [N] \setminus \{i\}} x_j(a_j) \\
    \end{pmatrix}
    \label{eq:payoff-vector}
\end{equation*}
is the corresponding payoff vector of player $i$.
\subsection{Continuous-Time FTRL}
Learning in games studies how players adapt their strategies over time in response to payoff feedback generated by the evolving play of others. In this paper, we focus on learning rules in the following form,
\begin{equation*}
        x = f(y), \quad 
        \dot{y} = g(y, p),
\end{equation*}
where $x$ is the mixed strategy, $p$ is the payoff vector and $y$ is an auxiliary state that stores historical information about the environment.

One of the most studied families of such learning rules is the FTRL class of algorithms, which is given by
\begin{equation*}
    x = Q(y), \quad \dot{y} = p,
\end{equation*}
where
\begin{equation}
\label{eq:Q}
    Q(y) = \argmax_{x' \in \Delta_k} \langle x', y \rangle - h(x')
\end{equation}
for some regularizer $h$. In this paper, we make the following assumptions on $h$:
\begin{assumption}
\label{asm:regularizer}
    We assume
    \begin{enumerate}
        \item $h: \bbR^{n} \to \bbR$ is smooth in the sense that $h \in C^{\infty}$ on $\Delta_n^\circ$, and
        \item the Hessian $\nabla^{2}h(x)$ is positive definite for all $x \in \Delta_{n}^\circ$,
    \end{enumerate}
    where $\Delta_n^\circ$ is the relative interior of the probability simplex.
\end{assumption}

\subsection{Geometric Control}
In this subsection, we present our main theoretic tools from
geometric control.

Consider the nonlinear control system
\begin{equation} \label{eq:sys_x}
\Sigma_x: \quad \dot{x} = f(x, u), \quad x \in M, \quad u \in U,
\end{equation}
where $M$ is a connected smooth manifold (in our applications,
$M$ will be the relative interior of a simplex or a product of
simplices) and $U$ is the set of admissible controls. We assume
that for each fixed $u \in U$ the map $x \mapsto f(x, u)$ defines a
smooth vector field on $M$.

For a fixed control value $u \in U$ write $f_u(\cdot) := f(\cdot, u)$
for the corresponding vector field. We denote by $e^{tf_u}(x_0)$ the
time-$t$ flow of $f_u$ starting from $x_0$, i.e., the unique value at
time $t$ of the solution $x(\cdot)$ to $\dot{x} = f_u(x)$ with initial condition
$x(0) = x_0$ whenever the solution uniquely exists.

If the control signal $u(\cdot)$ is piecewise constant, taking values
$u_1, \dots, u_k \in U$ over successive time intervals of lengths
$t_1, \dots, t_k \ge 0$, then the resulting trajectory satisfies
\begin{equation*}
x(t_1 + \dots + t_k) = e^{t_k f_{u_k}} \circ \dots \circ e^{t_1 f_{u_1}}(x_0),
\end{equation*}
which we sometimes abbreviate
$x = e^{t_k f_{u_k}} \dots e^{t_1 f_{u_1}} x_0$.

Let
\begin{equation*}
\mathcal{F} := \{f_u(\cdot) : u \in U\}
\end{equation*}
denote the family of control vector fields. For $x_0 \in M$ and
$T \ge 0$, define the attainable set within time $T$ by
\begin{align*}
A_x(\le T, x_0) &= \bigg\{ e^{t_k f_{u_k}} \circ \dots \circ e^{t_1 f_{u_1}}(x_0) : \\
&\qquad k \in \mathbb{N}, u_i \in U, t_i \ge 0, \sum_{i=1}^k t_i \le T \bigg\},
\end{align*}
and the (forward) attainable set by $A_x(x_0) := \bigcup_{T > 0} A_x(\le T, x_0)$.

Based on these attainable sets, we define three forms of controllability for the system \eqref{eq:sys_x}:
\begin{enumerate}
    \item \textbf{Small-time local controllability (STLC):} The system is STLC if for every $x \in M$ and every $T > 0$, the state $x$ is in the interior of the attainable set within time $T$, i.e., $x \in \operatorname{int}(A_x(\le T, x))$.
    \item \textbf{Local controllability:} The system is locally controllable if for every $x \in M$, the state $x$ is in the interior of the forward attainable set, i.e., $x \in \operatorname{int}(A_x(x))$.
    \item \textbf{Global controllability:} The system is globally controllable (or simply \textit{controllable}) on $M$ if the attainable set from any state is the entire manifold, i.e., $A_x(x) = M$ for all $x \in M$.
\end{enumerate}
Throughout this paper, we will invoke controllability results for driftless
systems under a mild condition on the set of admissible controls. 

\begin{definition}[Proper Control Set]
We call a control set $U \subset \mathbb{R}^m$ \textbf{proper} if it is
convex, compact, and contains the origin in its relative interior.
\end{definition}

This condition guarantees that one can realize small control
variations in all directions within the affine subspace generated by $U$. In particular, if
$u_0 \in \Delta_m^\circ$ is fully mixed, then the translated set $V := \Delta_m - u_0$
is proper in the affine subspace $\{v \in \mathbb{R}^m : \langle v, \mathbf{1} \rangle = 0\}$ and
satisfies $0 \in V^\circ$.

For $C^1$ vector fields $X$ and $Y$ on $M$, their Lie bracket $[X, Y]$ is
the $C^0$ vector field defined (in local coordinates) by
\begin{equation*}
[X, Y](x) = DY(x)X(x) - DX(x)Y(x).
\end{equation*}
The Lie algebra generated by a family of smooth vector fields
$\mathcal{G}$, denoted $\operatorname{Lie}(\mathcal{G})$, is the smallest collection of vector fields
containing $\mathcal{G}$ and closed under finite linear combinations and
Lie brackets. Its evaluation at $q \in M$ is the subspace
\begin{equation*}
\operatorname{Lie}_q(\mathcal{G}) := \{X(q) : X \in \operatorname{Lie}(\mathcal{G})\} \subseteq T_q M.
\end{equation*}

\begin{definition}[Bracket Generating]
We say that a family of vector fields $\mathcal{G}$ is bracket generating on $M$ if $\operatorname{Lie}_q(\mathcal{G}) = T_q M$ for every $q \in M$.
\end{definition}


We will use (without proof) the following classical facts.

\begin{proposition}[Chow-Rashevskii]
\label{prop:chow-rashevskii}
Consider a driftless control-affine system $\dot{x} = \sum_{i=1}^m u_i f_i(x)$ on a connected
manifold $M$ with a proper control set $U$. If the family of vector fields $\{f_1, \dots, f_m\}$ is
bracket generating on $M$, then the system is small-time locally
controllable (STLC) on $M$.
\end{proposition}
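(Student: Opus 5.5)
Although the paper uses Chow--Rashevskii as a classical fact, I would prove it by the standard orbit-type argument. It builds a composition of flows of the $f_i$ whose differential has full rank, and then uses time-reversibility to place the base point in the interior of the attainable set.

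\emph{Step 1: reduction to symmetric controls.} Since $U$ is convex and contains the origin in its relative interior, there is $c>0$ such that $\pm c\, e_i \in U$ for each coordinate direction $e_i$ of the affine span of $U$. I take the $f_i$ to be indexed along these directions; otherwise I pass to a basis of the span and the corresponding linear combinations of the $f_i$, which generate the same Lie algebra. With the constant control $\pm c e_i$ the trajectory is the flow $e^{\pm t c f_i}$. So every flow $e^{s f_i}$ with $s\in\bbR$ is realizable in time $|s|/c$. In particular, any finite composition of flows $e^{s_k f_{i_k}}\circ\cdots\circ e^{s_1 f_{i_1}}$ is attainable in total time $\sum_k |s_k|/c$, which can be made smaller than any prescribed $T>0$ by taking the $|s_k|$ small.

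\emph{Step 2: a full-rank composition.} Fix $q\in M$ and set $n=\dim M$. I build inductively indices $i_1,\dots,i_k$ and the map
\begin{equation*}
\Phi_k(s_1,\dots,s_k)=e^{s_k f_{i_k}}\circ\cdots\circ e^{s_1 f_{i_1}}(q),
\end{equation*}
together with a point $\bar s^{(k)}$ arbitrarily close to $0$ at which $D\Phi_k$ has rank $k$. The image of a small neighborhood of $\bar s^{(k)}$ is then an embedded $k$-dimensional submanifold $S_k$. If $k<n$, I claim some $f_j$ is not tangent to $S_k$ at some point of $S_k$ near $\Phi_k(\bar s^{(k)})$. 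Suppose instead that every $f_j$ were tangent to $S_k$ on this piece. Then all iterated Lie brackets of the $f_j$ would also be tangent to $S_k$ there, since brackets of vector fields tangent to a submanifold remain tangent. Hence $\operatorname{Lie}_p(\{f_1,\dots,f_m\})$ would have dimension at most $k<n$ at such points $p$, contradicting the bracket generating assumption. Choosing $i_{k+1}=j$ and a nearby parameter, the map $\Phi_{k+1}$ has rank $k+1$ at $(\bar s^{(k)}, 0^{+})$ perturbed slightly. Iterating up to $k=n$ yields $\Phi:=\Phi_n$ with $\operatorname{rank} D\Phi(\bar s)=n$, where $\bar s$ lies in any prescribed small ball around $0$.

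\emph{Step 3: recentering at $q$ by reversibility.} Define
\begin{equation*}
\Psi(s)=e^{-\bar s_1 f_{i_1}}\circ\cdots\circ e^{-\bar s_n f_{i_n}}\circ\Phi(s).
\end{equation*}
This undoes the fixed path, so $\Psi(\bar s)=q$. The composed inverse flows form a diffeomorphism, so $D\Psi(\bar s)$ still has rank $n$. By the inverse function theorem, $\Psi$ maps a neighborhood of $\bar s$ onto a neighborhood of $q$. Every point $\Psi(s)$ is reached by a piecewise-constant control in time $\sum_k(|s_k|+|\bar s_k|)/c$, which is below $T$ once $\bar s$ and the neighborhood are small enough. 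Hence $q\in\operatorname{int}A_x(\le T,q)$ for every $T>0$, which is STLC.

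I expect the main obstacle to be Step 2. The difficulty is ensuring that the rank can be raised at parameters arbitrarily close to $0$, so that the time budget stays small. I would handle this by working on a small neighborhood of $q$ and applying the tangency contradiction on open subsets of $S_k$. This works because the set of points where all $f_j$ are tangent to $S_k$, if it had interior, would force the bracket contradiction. A secondary technical point is the regularity of the fields: the paper defines brackets for $C^1$ fields, so I would assume the $f_i$ are smooth, as stated in the surrounding setup, so that iterated brackets make sense.
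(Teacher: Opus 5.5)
\emph{The gap is in Step 1, not in the core argument.} Your Steps 2 and 3 are the standard argument: rank raising along compositions of flows, then recentering at $q$ by time-reversibility. They are sound. The paper gives no proof of its own and cites the result as classical.

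The problem is the sentence ``otherwise I pass to a basis of the span and the corresponding linear combinations of the $f_i$, which generate the same Lie algebra.'' Suppose $\operatorname{span}U\neq\mathbb{R}^m$ and $v_1,\dots,v_r$ is a basis of $\operatorname{span}U$. Then the fields you can realize are $g_\ell=\sum_i (v_\ell)_i f_i$. In general $\lie\{g_1,\dots,g_r\}$ is a proper subalgebra of $\lie\{f_1,\dots,f_m\}$, and the bracket-generating hypothesis on the $f_i$ says nothing about it.

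This cannot be patched under the stated hypotheses. A proper control set is only required to contain $0$ in its \emph{relative} interior, so the proposition as literally stated is false. Take $M=\mathbb{R}^2$, $f_1=\partial_{x_1}$, $f_2=\partial_{x_2}$, and $U=[-1,1]\times\{0\}$. Then $U$ is convex, compact, and has $0$ in its relative interior, and $\{f_1,f_2\}$ is bracket generating. Yet $A_x(\le T,q)=\{q+(t,0):|t|\le T\}$ has empty interior, so the system is not STLC.

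\emph{How to repair it.} To prove a true statement you must strengthen the hypothesis in one of two ways:
\begin{enumerate}
\item assume $0\in\operatorname{int}U$, so that $\pm c e_i\in U$ for the standard basis and your reduction is legitimate; or
\item assume that the family $\{\sum_i v_i f_i: v\in\operatorname{span}U\}$ is bracket generating.
\end{enumerate}
Under either version, Steps 2--3 go through with the $g_\ell$ in place of the $f_i$, and your proof is complete. One small precision in Step 2: the rank-$(k+1)$ point is $(s',0)$, where $\Phi_k(s')=p$ is the non-tangency point. Full rank then persists for small $s_{k+1}$ of either sign.

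Both of the paper's applications satisfy the corrected hypothesis:
\begin{itemize}
\item In Theorem~\ref{thm:N-player-controllability-1}, the set $W\subset\mathbb{R}^{m-1}$ is full-dimensional.
\item In Theorem~\ref{thm:two-player}, the identity $\sum_j (u_0)_j b_j=P_HAu_0=0$ together with $\sum_j (u_0)_j=1$ gives $b_m=-\sum_j (u_0)_j(b_j-b_m)$. So the realizable constant fields $b_j-b_m$ already span $\operatorname{im}(P_HA)$.
\end{itemize}
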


\begin{proposition}[Local Controllability Implies Global Controllability]
\label{prop:local-global-controllability}
Consider a control system defined on a connected finite-dimensional
smooth manifold $M$. If the system is locally controllable, then it is
globally controllable~\cite[Theorem 1]{boscain21}.
\end{proposition}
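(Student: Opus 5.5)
The plan is to fix $x\in M$ and show that the forward attainable set $A_x(x)$ is nonempty, open, and closed in $M$. Connectedness of $M$ then forces $A_x(x)=M$. Throughout I will use the semigroup property of attainable sets: if $y\in A_x(x)$, then $A_x(y)\subseteq A_x(x)$. This holds because two piecewise-constant controls can be concatenated, so a composition of flows $e^{t_kf_{u_k}}\circ\dots\circ e^{t_1f_{u_1}}$ followed by another such composition is again of the same form.

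\emph{Openness.} Let $y\in A_x(x)$. By local controllability, $y\in\operatorname{int}(A_x(y))$. By the semigroup property, $\operatorname{int}(A_x(y))\subseteq A_x(x)$, so a whole neighbourhood of $y$ lies in $A_x(x)$. Nonemptiness is immediate, since $x\in A_x(x)$.

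\emph{Closedness.} This is the main obstacle. Local controllability is a statement about \emph{forward} reachability from $z$, whereas closedness needs a \emph{backward} statement: every point $w$ close enough to $z$ can reach $z$. Given such a statement, any $z\in\overline{A_x(x)}$ has some $w\in A_x(x)$ in that backward neighbourhood, so $z\in A_x(w)\subseteq A_x(x)$. To obtain the backward statement I would pass to finite-dimensional endpoint maps, in three steps.

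\begin{enumerate}
\item Since $z\in\operatorname{int}(A_x(z))$ and $A_x(z)$ is a countable union of images of maps
\begin{equation*}
E_w^{u}(t_1,\dots,t_k)=e^{t_kf_{u_k}}\circ\dots\circ e^{t_1f_{u_1}}(w),
\end{equation*}
Baire's theorem yields one control sequence $u=(u_1,\dots,u_k)$ whose image $E_z^{u}(\cdot)$ has nonempty interior. I would then take $u$ so that this image contains a neighbourhood of $z$ itself; this needs care and might require prepending further controls.

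\item By Sard's theorem, the critical values of the smooth map $(t_1,\dots,t_k)\mapsto E_z^{u}(t)$ have measure zero. Hence there is a time vector $t^*$ at which $DE_z^{u}(t^*)$ is surjective. By the implicit function theorem, the point $E_z^{u}(t^*)$ has a neighbourhood $N$ covered by $E_w^{u}$ for every $w$ sufficiently close to $z$. Here I use that $E_w^{u}$ depends smoothly on the base point $w$, by smooth dependence of flows on initial data.

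\item Finally, I would close the loop back to $z$. I would apply local controllability at the point $q=E_z^{u}(t^*)$ and arrange, by choosing $t^*$ and $q$ suitably, that $z$ can be reached from every point of $N$. Concretely, the first thing I would try is to build the submersion from a concatenated control that leaves $z$ and returns to a regular value arbitrarily close to $z$. Failing that, I would replace the implicit-function argument by a Brouwer-degree argument: a map whose image covers $z$ with nonzero local degree still covers $z$ under small perturbation of the base point. Either route gives a neighbourhood $W$ of $z$ with $z\in A_x(w)$ for all $w\in W$.
\end{enumerate}

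With the backward-neighbourhood statement of the closedness paragraph in hand, $A_x(x)$ is clopen and nonempty in the connected manifold $M$, hence equal to $M$. Since $x$ was arbitrary, the system is globally controllable. The delicate part is step~3, making the backward reachability rigorous without extra regularity of $U$. This is exactly where I would follow the argument of \cite[Theorem 1]{boscain21} rather than reinvent it.
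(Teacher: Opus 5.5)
Your openness argument is correct, and so is the reduction of closedness to a backward-neighbourhood property (every $w$ near $z$ can reach $z$). The paper itself gives no proof of this proposition; it imports it wholesale from \cite[Theorem 1]{boscain21}. The trouble is that the closedness half you do write out cannot be completed along the lines you propose, and its essential step is deferred to that same citation.

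\textbf{Why Step 3 cannot close the loop.} Steps 1--2 give a regular endpoint $q=E^u_z(t^*)$, so every $w$ near $z$ reaches every point near $q$. To close the loop you need some point near $q$ from which $z$ is reachable, which is the backward statement itself. Local controllability at $q$ is forward information and cannot supply it. Your two fallbacks also fail:
\begin{enumerate}
\item Taking $q$ arbitrarily close to $z$ does not help, because the neighbourhood $N$ produced by the implicit function theorem has no lower bound on its size as $q\to z$.
\item The degree variant needs a loop map covering $z$ with nonzero degree. That is again the missing object: a regular loop at $z$, i.e.\ $E^u_z(t^*)=z$ with $D_tE^u_z(t^*)$ surjective. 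Given such a loop, the implicit function theorem yields the backward neighbourhood at once.
\end{enumerate}

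\textbf{A counterexample to the localized strategy.} No argument using only local controllability at $z$ and at the regular endpoints $q$ can work. On $\mathbb{R}^2$ take
\[
f_1=\partial_{x_1},\quad f_2=-\Bigl(\tfrac{x_1^2}{1+x_1^2}+x_2^2\Bigr)\partial_{x_1},\quad f_{3,4}=\pm\phi(x_1)\,\partial_{x_2},
\]
with $\phi$ smooth, $\phi=0$ on $(-\infty,0]$ and $\phi>0$ on $(0,\infty)$. Set $S=\{x_2=0,\ x_1\le 0\}$.
\begin{enumerate}
\item Every point except the origin is locally controllable. One moves along the axis with $f_1,f_2$, or goes right past $x_1=0$, adjusts $x_2$ with $f_{3,4}$, and returns with $f_2$.
\item No trajectory starting off $S$ ever enters $S$. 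Indeed $f_1,f_2$ preserve $x_2$, the origin is an equilibrium of $f_2$ on the axis, and $f_{3,4}$ vanish on $\{x_1\le 0\}$.
\item For $x=(1,0)$ one gets $A(x)=\mathbb{R}^2\setminus S$. Hence $z=(-1,0)\in\overline{A(x)}\setminus A(x)$.
\end{enumerate}
Yet every hypothesis your argument invokes holds: local controllability on $A(x)$, at $z$, and at every regular endpoint from $z$ (these all lie off $S$). The theorem is not contradicted, since local controllability fails at the origin. But this shows the backward neighbourhood at $z$ must be obtained from the hypothesis at other points, in a genuinely global way. That is exactly the content of \cite[Theorem 1]{boscain21}, which you defer to, so the proposal does not yet prove the statement.

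\textbf{A smaller issue in Step 1.} $A(z)$ is a countable union of images of the maps $E^u_z$ only when $U$ is countable. Here $U=\Delta_m$, and the proposition assumes no regularity in $u$. So the Baire/Sard step also needs repair, for example by treating $u$ as additional smooth parameters when $f$ is affine in $u$.
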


\begin{proposition}[Krener's Theorem]
\label{prop:krener}
Consider \eqref{eq:sys_x}
and suppose the vector fields in $\mathcal{F}$ are real-analytic (in
particular, algebraic). For any $q \in M$, the attainable set $A_x(\le T, q)$
has nonempty interior in $M$ for every $T > 0$ (and hence $A_x(q)$
has nonempty interior) if and only if the system is bracket
generating at $q$, i.e., $\operatorname{Lie}_q(\mathcal{F}) = T_q M$.
\end{proposition}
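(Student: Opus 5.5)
Although the paper cites Proposition~\ref{prop:krener} as a classical fact, I would prove it by treating the two directions separately. Sufficiency follows from the rank-building argument of Krener. Necessity uses analyticity, through the orbit theorem of Nagano.

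\emph{Sufficiency.} Assume $\operatorname{Lie}_q(\mathcal{F}) = T_qM$ with $n = \dim M$, and fix $T>0$.

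First, the set of points where $\operatorname{Lie}_x(\mathcal{F})$ has full rank is open. Finitely many brackets spanning at $q$ still span nearby. So there is a neighborhood $W$ of $q$ on which the rank condition holds.

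Next I would inductively build fields $X_1,\dots,X_k \in \mathcal{F}$ and an open set $\Omega_k \subset \{t \in \mathbb{R}^k: t_i>0,\ \sum t_i < T\}$, with all $t_i$ small. The map
\begin{equation*}
\Phi_k(t_1,\dots,t_k) = e^{t_k X_k}\circ\cdots\circ e^{t_1 X_1}(q)
\end{equation*}
should have rank $k$ on $\Omega_k$ and send $\Omega_k$ into $W$.

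To start, some $X_1 \in \mathcal{F}$ satisfies $X_1(q)\neq 0$; otherwise every bracket vanishes at $q$. For the inductive step with $k<n$, shrink $\Omega_k$ so that $N = \Phi_k(\Omega_k)$ is an embedded $k$-dimensional submanifold contained in $W$. If every $X\in\mathcal{F}$ were tangent to $N$ at every point of $N$, then all brackets would also be tangent to $N$. That would give $\dim \operatorname{Lie}_x(\mathcal{F}) \le k < n$ for $x\in N\subset W$, which is a contradiction. Hence some $X_{k+1}\in\mathcal{F}$ and some $\bar t\in\Omega_k$ satisfy $X_{k+1}(\Phi_k(\bar t))\notin T N$.

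The map $(t,s)\mapsto e^{sX_{k+1}}\Phi_k(t)$ then has rank $k+1$ at $(\bar t,0)$. Since rank is lower semicontinuous, it also has rank $k+1$ at $(t,s)$ with $t$ near $\bar t$ and small $s>0$, and these times keep the total below $T$.

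At $k=n$ the inverse function theorem shows that $\Phi_n(\Omega_n)$ is a nonempty open subset of $A_x(\le T,q)$. The same holds for $A_x(q)$, since it contains $A_x(\le T,q)$.

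\emph{Necessity.} Suppose $d := \dim \operatorname{Lie}_q(\mathcal{F}) < n$. The key step is to show that, for analytic fields, $\operatorname{Lie}_x(\mathcal{F})$ is transported by the flows. That is, for $X\in\mathcal{F}$ I want
\begin{equation*}
\operatorname{Lie}_{e^{tX}(x)}(\mathcal{F}) = D e^{tX}(x)\,\operatorname{Lie}_x(\mathcal{F}).
\end{equation*}
For $Y\in \operatorname{Lie}(\mathcal{F})$, the map $t\mapsto (e^{-tX})_*Y(e^{tX}x)$ is real-analytic in $t$. Its Taylor series at $t=0$ is
\begin{equation*}
\sum_{j\ge 0} \frac{t^j}{j!}\,(\operatorname{ad}_X)^j Y\,(x).
\end{equation*}
Every term lies in $\operatorname{Lie}_x(\mathcal{F})$, so by analyticity the pulled-back vector stays in $\operatorname{Lie}_x(\mathcal{F})$ for all $t$ in the interval of definition. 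Applying the same argument with $-X$ gives the reverse inclusion.

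Consequently the rank of the Lie algebra is constant, equal to $d$, along every concatenation of flows starting from $q$. The distribution $x\mapsto\operatorname{Lie}_x(\mathcal{F})$ is involutive and has constant rank on the orbit of $q$. Frobenius (or Nagano's theorem) then makes the orbit an immersed $d$-dimensional submanifold of $M$.

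Such a second-countable immersed submanifold of dimension $d<n$ has measure zero in $M$. The attainable set $A_x(q)$ is contained in this orbit, so it has empty interior, and therefore so does $A_x(\le T,q)$.

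\emph{Main obstacle.} I expect the hardest step to be the necessity direction, specifically the propagation identity above. One must justify that the Taylor series converges to the pulled-back vector along the whole flow, not just for small $t$. One must also give the orbit a manifold structure. I would handle the first point by a connectedness argument on the maximal interval where the identity holds, using analyticity to show that this set is both open and closed. For the manifold structure I would cite Nagano's theorem rather than reprove it.
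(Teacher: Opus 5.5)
The paper gives no proof of Proposition~\ref{prop:krener}. It lists it among the classical facts ``used without proof'' and never invokes it in a later argument, so there is no in-paper route to compare against.

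Your argument is the standard textbook proof, and it is correct. Sufficiency is Krener's rank-building construction; it needs only smoothness, and openness of the rank condition lets you assume it at $q$ alone. Necessity is the analytic orbit theorem. Two small corrections:

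\emph{Drop Frobenius.} $\operatorname{Lie}_x(\mathcal{F})$ has constant rank along the orbit of $q$, but generally not on any open set of $M$. For example, with $\mathcal{F}=\{\partial_y, x\partial_x\}$ on $\mathbb{R}^2$, the rank is $1$ on the $y$-axis and $2$ off it, while the orbit of the origin is the $y$-axis. So the constant-rank Frobenius theorem does not apply. The right citation is Nagano's theorem, which already contains your propagation identity and gives $T_x\mathcal{O}_q=\operatorname{Lie}_x(\mathcal{F})$.

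\emph{Your ``main obstacle'' is not a real difficulty.} Take any linear functional $\ell$ on $T_xM$ vanishing on $\operatorname{Lie}_x(\mathcal{F})$. The function $t\mapsto \ell\bigl((e^{-tX})_*Y(e^{tX}x)\bigr)$ is real-analytic on the connected interval of definition, and all its derivatives vanish at $t=0$. By the identity theorem it is identically zero, and no convergence of the Taylor series along the whole flow is needed. If you run the open--closed argument yourself, apply it to the set of $t$ where all derivatives of this function vanish. Do not apply it to the set where the vector lies in $\operatorname{Lie}_x(\mathcal{F})$: membership at a single $t_0$ says nothing about the Taylor coefficients there.
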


\begin{remark}
Proposition~\ref{prop:chow-rashevskii} is the workhorse for our driftless
reductions. Proposition~\ref{prop:krener} is used as a Lie-rank test for the
existence of reachable directions under analytic-
ity/algebraicity. Because STLC strictly implies local controllability,
Propositions~\ref{prop:chow-rashevskii} and \ref{prop:local-global-controllability} together guarantee that an everywhere
STLC system on a connected manifold is globally controllable.
\end{remark}

We will also use state equivalence to
pass controllability between systems. Consider another control
system
\begin{equation} \label{eq:sys_y}
\Sigma_y: \quad \dot{y} = g(y, u), \quad y \in N, \quad u \in U,
\end{equation}
with the same control set $U$. We say that $\Sigma_x$ and $\Sigma_y$ are state
equivalent if there exists a diffeomorphism $\Phi : M \rightarrow N$ such
that for every admissible control $u(\cdot)$ and every solution $x(\cdot)$
of $\Sigma_x$, the curve $y(t) := \Phi(x(t))$ is a solution of $\Sigma_y$ under
the same control $u(\cdot)$.

\begin{proposition}[State equivalence preserves controllability]
\label{prop:state-equivalence}
If $\Sigma_x$ and $\Sigma_y$ are state equivalent, then $\Sigma_x$ is globally controllable if and
only if $\Sigma_y$ is globally controllable.
\end{proposition}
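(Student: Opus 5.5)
The plan is a direct trajectory-transport argument through the diffeomorphism $\Phi$. We show that $\Phi$ maps attainable sets of $\Sigma_x$ onto attainable sets of $\Sigma_y$:
\begin{equation*}
\Phi\bigl(A_x(x_0)\bigr) = A_y\bigl(\Phi(x_0)\bigr) \quad \text{for every } x_0 \in M.
\end{equation*}
Global controllability of one system is then immediately equivalent to that of the other.

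For the inclusion $\Phi(A_x(x_0)) \subseteq A_y(\Phi(x_0))$, take any point $e^{t_k f_{u_k}} \circ \dots \circ e^{t_1 f_{u_1}}(x_0)$ and let $x(\cdot)$ be the corresponding trajectory under the piecewise constant control $u(\cdot)$. By the definition of state equivalence, $y(t) = \Phi(x(t))$ solves $\Sigma_y$ under the same control and starts at $\Phi(x_0)$. Uniqueness of solutions holds on each interval because $g_{u_i}$ is smooth. Hence $y(t) = e^{t_k g_{u_k}} \circ \dots \circ e^{t_1 g_{u_1}}(\Phi(x_0))$, which lies in $A_y(\Phi(x_0))$.

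The main obstacle is the reverse inclusion, because state equivalence as defined is stated in only one direction. It says solutions of $\Sigma_x$ map to solutions of $\Sigma_y$, but not conversely. I would deduce the converse infinitesimally. Fix a constant control $u$. For every $x \in M$, differentiating $\Phi(e^{tf_u}(x))$ at $t=0$ gives
\begin{equation*}
D\Phi(x) f_u(x) = g_u(\Phi(x)).
\end{equation*}
So $f_u$ and $g_u$ are $\Phi$-related for every $u \in U$. Since $\Phi$ is a diffeomorphism, $f_u = \Phi^{*} g_u$, and the standard naturality of flows under diffeomorphisms yields
\begin{equation*}
\Phi^{-1} \circ e^{t g_u} = e^{t f_u} \circ \Phi^{-1}
\end{equation*}
wherever either side is defined. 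In particular, flows defined on one side are defined for the same times on the other. Consequently $\Phi^{-1}$ carries piecewise-constant trajectories of $\Sigma_y$ to those of $\Sigma_x$, and the argument of the previous paragraph applies with the roles of the two systems exchanged. This gives $\Phi^{-1}(A_y(y_0)) \subseteq A_x(\Phi^{-1}(y_0))$.

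To conclude, suppose $\Sigma_x$ is controllable and let $y_0 \in N$. Then
\begin{equation*}
A_y(y_0) = \Phi\bigl(A_x(\Phi^{-1}(y_0))\bigr) = \Phi(M) = N,
\end{equation*}
since $\Phi$ is surjective. The converse implication follows by the symmetric argument, which completes the proof.
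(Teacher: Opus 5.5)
The paper gives no proof of this proposition to compare against: it is stated without proof in the preliminaries, next to the other classical facts (Chow--Rashevskii, the local-to-global result, and Krener). Your proof is correct and complete for the paper's definitions, namely attainable sets generated by piecewise-constant controls with possibly incomplete flows.

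The forward inclusion $\Phi(A_x(x_0)) \subseteq A_y(\Phi(x_0))$ follows from the definition together with uniqueness on each constant-control interval. You are right that the definition of state equivalence only works in one direction, so the reverse inclusion has to be argued. Your repair is the right one. Extract $D\Phi(x)f_u(x) = g_u(\Phi(x))$ by differentiating constant-control trajectories at $t=0$; a one-sided derivative is enough. Then invert $D\Phi(x)$ to see that $g_u$ and $f_u$ are $\Phi^{-1}$-related, so maximal integral curves, and hence piecewise-constant trajectories, correspond in both directions.

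This point matters for the paper. In the necessity steps of Theorem~\ref{thm:two-player}, the paper shows the dual system is not controllable and concludes that the primal system is not controllable. With the dual system playing the role of $\Sigma_x$, that is the implication ``$\Sigma_y$ controllable implies $\Sigma_x$ controllable.'' This is exactly the direction that needs your reverse transport.

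Your infinitesimal condition is also what the paper actually checks when it asserts state equivalence. An example is the computation $D\Phi(z)P_HA(\Phi(z))u = DQ(z)A(x)u$ in the proof of Theorem~\ref{thm:N-player-controllability-1}. There the primal system is defined as the push-forward of the dual one, so $\Phi$-relatedness holds in both directions by construction.
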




\section{Problem Formulation}

We formulate the steering problem on the relative interior of learners' strategy space and introduce the projected dual coordinates that will be used throughout the paper.

\subsection{Relative interior of the state space and projected mirror coordinates}

Consider $N$ learners and one controller. Learner $i$ has action set $[n_i]$, the controller has action set $[m]$, and learner $i$ uses the FTRL choice map
\[
Q_i(y_i)
=
\argmax_{x_i' \in \Delta_{n_i}}
\bigl\{
\langle x_i', y_i \rangle - h_i(x_i')
\bigr\},
\]
where $h_i$ satisfies Assumption~1. Define
\begin{equation}
\label{eq:def-n-player-HPX}
\begin{aligned}
    \cX^\circ
&:=
\Delta_{n_1}^\circ \times \cdots \times \Delta_{n_N}^\circ,\\
H_i
&:=
\{ z_i \in \mathbb{R}^{n_i} : \langle z_i, \mathbf{1}_{n_i} \rangle = 0 \},
\\
P_{H_i}
&:=
I_{n_i} - \frac{1}{n_i}\mathbf{1}_{n_i}\mathbf{1}_{n_i}^\top,\\
H &:= H_1 \times \cdots \times H_N,\\
P_H &:= \mathrm{diag}(P_{H_1}, \ldots, P_{H_N}).
\end{aligned}
\end{equation}
For convenience, write
\begin{equation}
\label{eq:def-n-player-Qh}
\begin{aligned}
    Q &:= Q_1 \oplus \cdots \oplus Q_N, \\
    \nabla h(x) &:= \nabla h_1(x_1) \oplus \cdots \oplus \nabla h_N(x_N).
\end{aligned}
\end{equation}
For each learner define
\begin{equation}
\label{eq:def-n-player-M}
\begin{aligned}
    M_i &:= P_{H_i}\nabla h_i(\Delta_{n_i}^\circ) \subset H_i,\\
M &:= M_1 \times \cdots \times M_N \subset H.
\end{aligned}
\end{equation}

Our formulation is based upon the following crucial lemma. We state this lemma as follows, but we will delay the proof to the next section.
\begin{lemma}
\label{lem:diffeomorphism}
For each $i$, the restriction
\[
\Phi_i := Q_i|_{M_i} : M_i \to \Delta_{n_i}^\circ
\]
is a diffeomorphism with inverse
\[
\Phi_i^{-1}(x_i) = P_{H_i}\nabla h_i(x_i).
\]
Consequently,
\[
\Phi := \Phi_1 \oplus \cdots \oplus \Phi_N : M \to X^\circ
\]
is a diffeomorphism with inverse
\[
\Phi^{-1}(x) = P_H \nabla h(x).
\]
\end{lemma}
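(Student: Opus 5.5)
The plan is to prove the statement for a single learner $i$ and then take products. Throughout I write $\psi_i(x_i) := P_{H_i}\nabla h_i(x_i)$ for $x_i \in \Delta_{n_i}^\circ$, so that $M_i = \psi_i(\Delta_{n_i}^\circ)$ by \eqref{eq:def-n-player-M}. There are three steps: first show $Q_i \circ \psi_i = \mathrm{id}$ on $\Delta_{n_i}^\circ$ via first-order optimality, then show $\psi_i$ is a local diffeomorphism onto an open subset of $H_i$, and finally combine the two.

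\textbf{Step 1: $Q_i(\psi_i(x_i)) = x_i$.} Fix $x_i \in \Delta_{n_i}^\circ$ and set $y_i = \psi_i(x_i)$. By definition of $P_{H_i}$,
\[
y_i - \nabla h_i(x_i) = -\tfrac{1}{n_i}\langle \nabla h_i(x_i), \mathbf{1}\rangle \mathbf{1} \in \laspan\{\mathbf{1}\}.
\]
Hence the gradient of $F(x') := \langle x', y_i\rangle - h_i(x')$ at $x_i$ is orthogonal to $H_i$, which is the tangent space of the simplex. So $x_i$ is a stationary point of $F$ restricted to the affine hull of $\Delta_{n_i}$.

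By Assumption~\ref{asm:regularizer}, $\nabla^2 h_i \succ 0$ on $\Delta_{n_i}^\circ$. Thus $F$ is strictly concave on $\Delta_{n_i}^\circ$, and by continuity it is concave on $\Delta_{n_i}$. A stationary point in the relative interior of a convex set is therefore the unique global maximizer. This gives $Q_i(y_i) = x_i$, which also shows that \eqref{eq:Q} is well defined at such $y_i$.

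Two consequences follow at once. First, $\psi_i$ is injective: $\psi_i(x_i) = \psi_i(x_i')$ forces $x_i = Q_i(\psi_i(x_i)) = Q_i(\psi_i(x_i')) = x_i'$. Second, $\Phi_i = Q_i|_{M_i}$ maps $M_i$ bijectively onto $\Delta_{n_i}^\circ$, with inverse $\psi_i$.

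\textbf{Step 2: $\psi_i$ is a local diffeomorphism.} The map $\psi_i$ is smooth on $\Delta_{n_i}^\circ$, whose tangent space is $H_i$. Its differential there is
\[
D\psi_i(x_i)v = P_{H_i}\nabla^2 h_i(x_i)v, \qquad v \in H_i,
\]
and this maps $H_i$ into $H_i$. For nonzero $v \in H_i$ we have $P_{H_i} v = v$, so
\[
\langle v, P_{H_i}\nabla^2 h_i(x_i) v\rangle = \langle v, \nabla^2 h_i(x_i) v\rangle > 0.
\]
Hence $D\psi_i(x_i)$ is injective, and therefore invertible, as a map $H_i \to H_i$.

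By the inverse function theorem, $\psi_i$ is a local diffeomorphism between open subsets of the affine space $\{\langle x, \mathbf{1}\rangle = 1\}$ and of $H_i$. In particular $\psi_i$ is an open map. So $M_i$ is an open subset of $H_i$, hence a smooth manifold. Since $\psi_i$ is also injective by Step 1, it is a diffeomorphism $\Delta_{n_i}^\circ \to M_i$.

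\textbf{Step 3: conclusion.} The inverse of $\psi_i$ is $\Phi_i$ by Step 1, so $\Phi_i$ is smooth and is a diffeomorphism with $\Phi_i^{-1} = P_{H_i}\nabla h_i$. The product statement follows because $P_H$ and $\nabla h$ are block-diagonal, as defined in \eqref{eq:def-n-player-HPX} and \eqref{eq:def-n-player-Qh}, and a product of diffeomorphisms is a diffeomorphism.

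\textbf{Main obstacle.} The delicate point is Step 1, since Assumption~\ref{asm:regularizer} gives smoothness and convexity of $h_i$ only on the relative interior. I need to justify that no boundary point of $\Delta_{n_i}$ can beat $x_i$. I would handle this by noting that $h_i$ is defined on all of $\mathbb{R}^{n_i}$; assuming it is continuous on $\Delta_{n_i}$, convexity extends to the closure as a limit of interior convex combinations. The supporting-hyperplane inequality $F(x') \le F(x_i)$ then holds for all $x' \in \Delta_{n_i}$, and strict concavity along the segment from $x_i$ to $x'$ (which lies in the interior except at $x'$) makes the inequality strict for $x' \neq x_i$.
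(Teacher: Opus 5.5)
Your argument is correct, but it runs in the opposite direction from the paper's.

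\emph{The paper's route.} The paper works with $Q$ directly. It uses the KKT conditions in both directions (necessity and sufficiency) to characterize $Q^{-1}(\Delta_n^\circ)=\nabla h(\Delta_n^\circ)+\laspan\{\mathbf 1\}$ and to obtain injectivity on $H$. It then applies the implicit function theorem to the bordered system $F(x,\lambda;y)=0$, whose Jacobian $J$ is invertible, to show that $Q$ is smooth there.

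\emph{Your route.} You start instead from the explicit candidate inverse $\psi_i=P_{H_i}\nabla h_i$. Only the sufficiency of first-order conditions is needed to get $Q_i\circ\psi_i=\mathrm{id}$, and this gives bijectivity at once, with no KKT necessity and no Berge. The inverse function theorem applied to $\psi_i$ then gives smoothness of $\Phi_i=\psi_i^{-1}$. It rests on positive definiteness of $P_{H_i}\nabla^2 h_i(x_i)$ on $H_i$, which is the same nondegeneracy that makes $J$ invertible.

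\emph{What each route buys.} Your route establishes explicitly that $M_i$ is open in $H_i$. Hence $M_i$ is a manifold with $T_zM_i=H_i$, which the proof of Theorem~\ref{thm:two-player} uses without proof. It also shows that the inverse is smooth, which the paper's reduction to ``bijective and smooth'' leaves implicit. The paper's route, in turn, gives smoothness of $Q$ on the whole open set $\nabla h(\Delta_n^\circ)+\laspan\{\mathbf 1\}$. That is needed for expressions like $DQ(\nabla h(x))$ in Remark~\ref{rmk:nabla-h}. You would recover it by noting that $Q_i=\Phi_i\circ P_{H_i}$ on that set, by translation invariance.

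\emph{The boundary issue.} The point you flag is real: Assumption~\ref{asm:regularizer} says nothing about $h_i$ on $\partial\Delta_{n_i}$. The paper silently uses the same extra regularity: strict concavity on all of $\Delta_n$, existence of a maximizer by compactness, and sufficiency of KKT. So stating continuity of $h_i$ on $\Delta_{n_i}$ explicitly costs you nothing relative to the paper.
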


Because each $Q_i$ is invariant under translations along $\mathrm{span}\{\mathbf{1}_{n_i}\}$, the joint map $Q$ satisfies $Q(y) = Q(P_H y)$. 
Moreover, for every $z \in M$,
\[
DQ(z)P_H = DQ(z),
\qquad
\mathrm{im}\,DQ(z) \subset T_{Q(z)}\cX^\circ = H.
\]
These identities allow us to pass freely between primal and projected dual coordinates, as we will make clear in the following subsections.

\begin{remark}[Why we work on $\cX^\circ$]
The controllability results in this paper are stated on the relative interior of the state space $\cX^\circ$. This is the natural domain of the projected mirror coordinates $\Phi^{-1}$, and it is forward invariant for important FTRL dynamics such as replicator dynamics. We therefore study steering on the relative interior only. We do not claim boundary inaccessibility for every regularizer covered by Assumption~1.
\end{remark}

\subsection{Two-player case}
\label{sec:two-player-formulation}

We first consider one learner with $n$ actions and one controller with $m$ actions. Let $A \in \mathbb{R}^{n \times m}$ 
be the payoff matrix mapping the controller mixed strategy $u \in \Delta_m$ to the learner payoff vector $Au$. The raw FTRL dynamics are
\begin{equation}
x = Q(y),
\qquad
\dot y = Au,
\qquad
u \in \Delta_m .
\label{eq:two-player-raw}
\end{equation}
Define
\begin{equation}
\label{eq:def-two-player-HPM}
\begin{aligned}
    H &:= \{ z \in \mathbb{R}^n : \langle z, \mathbf{1}_n \rangle = 0 \},
    \\
    P_H &:= I_n - \frac{1}{n}\mathbf{1}_n\mathbf{1}_n^\top,
    \\
    M &:= P_H \nabla h(\Delta_n^\circ),
\end{aligned}
\end{equation}
and let $\Phi := Q|_M : M \to \Delta_n^\circ$ 
be the diffeomorphism from Lemma~1. With the projected dual state $z := P_H y$, 
the system becomes
\begin{equation}
\Sigma_z^{2p}:
\qquad
\dot z = P_H A u,
\qquad
z \in M,
\quad
u \in \Delta_m .
\label{eq:two-player-dual}
\end{equation}
Transporting this dynamics to the relative interior of the probability simplex via $\Phi$ yields the induced control system
\begin{equation}
\begin{aligned}
    \Sigma_x^{2p}:
    \quad
    \dot x
    &=
    D\Phi(\Phi^{-1}(x))\,P_H A u \\
    &=
    DQ(\Phi^{-1}(x))\,A u,
    \quad
    x \in \Delta_n^\circ,
    \ 
u \in \Delta_m.
\end{aligned}
\label{eq:two-player-primal}
\end{equation}
Systems \eqref{eq:two-player-dual} and \eqref{eq:two-player-primal} are state equivalent. Throughout the paper, controllability in the two-player case means controllability of \eqref{eq:two-player-primal} on $\Delta_n^\circ$, or equivalently of \eqref{eq:two-player-dual} on $M$. 

In this paper, we identify necessary and sufficient conditions for the controllability of~\eqref{eq:two-player-primal}, i.e., when can we steer a single FTRL agent to any desired probability distribution through repeated play of the game?

\subsection{Multi-player case}
\label{sec:multi-player-formulation}

Now consider $N$ learners and one controller. Let $p_i(x_{-i},u)$ denote the payoff vector of learner $i$. Since it is linear in the controller strategy, we write
\[
p_i(x_{-i},u) = A_i(x_{-i})u,
\qquad i \in [N].
\]
Stack the learner states and dual variables as
\[
x := x_1 \oplus \cdots \oplus x_N,
\qquad
y := y_1 \oplus \cdots \oplus y_N,
\]
and define
\[
A(x)
:=
\begin{bmatrix}
A_1(x_{-1}) \\
\vdots \\
A_N(x_{-N})
\end{bmatrix}
=
\bigl[a_1(x)\ \cdots\ a_m(x)\bigr].
\]
The raw learning dynamics are
\begin{equation}
x_i = Q_i(y_i),
\qquad
\dot y_i = A_i(x_{-i})u,
\qquad
i \in [N],
\quad
u \in \Delta_m .
\label{eq:n-player-raw-block}
\end{equation}
Equivalently,
\begin{equation}
x = Q(y),
\qquad
\dot y = A(x)u,
\qquad
x \in \cX^\circ,
\quad
u \in \Delta_m .
\label{eq:n-player-raw}
\end{equation}

With the projected dual state $z := P_H y$ and the diffeomorphism $\Phi : M \to \cX^\circ$, the state-equivalent projected dual dynamics are
\begin{equation}
\Sigma_z:
\qquad
\dot z = P_H A(\Phi(z))u,
\qquad
z \in M,
\quad
u \in \Delta_m .
\label{eq:n-player-dual}
\end{equation}
The induced primal dynamics on the product of simplices are
\begin{equation}
\begin{aligned}
        \Sigma_x:
    \quad
    \dot x
    &=
    D\Phi(\Phi^{-1}(x))\,P_H A(x)u \\
    &=
    DQ(\Phi^{-1}(x))\,A(x)u,
    \quad
    x \in \cX^\circ,
    \ 
    u \in \Delta_m.
\end{aligned}
\label{eq:n-player-primal}
\end{equation}
Again, \eqref{eq:n-player-dual} and \eqref{eq:n-player-primal} are state equivalent. All controllability statements for the multi-learner model refer to \eqref{eq:n-player-primal}, while the proofs will often be carried out in the projected dual coordinates \eqref{eq:n-player-dual}. 

In this paper, we identify multiple sufficient conditions for the controllability of \eqref{eq:n-player-primal}, i.e., when can we simultaneously steer multiple FTRL agents to desired states through repeated play of the game?

\section{Main Results}\label{sec:main}
\subsection{The 2-Player Case}
We first analyze the setting of a single controller interacting with a single learner. Building on the dynamics established in the problem formulation, our goal is to determine exactly when the controller can steer the learner to any target mixed strategy within the relative interior of the simplex. A key technical challenge in analyzing this setup is the learner's continuous, state-dependent drift. To address this, we explore how the controller can effectively cancel out this autonomous drift. Doing so allows us to transform the original nonlinear dynamics into a simpler driftless control system, which ultimately reduces the complex question of steerability to a straightforward algebraic rank test.

Our first result gives a necessary and sufficient condition for controllability of \eqref{eq:two-player-primal} on $\Delta_{n}^{\circ}$. The condition is based on the concept of a neutralizing strategy: a mixed strategy such that, if the controller plays it, the learner’s payoff vector is constant (and hence independent of the learner’s current action). Formally:
\begin{definition}[Neutralizing Strategy]\label{def:neutralizing}
    A mixed strategy $u_0\in\Delta_m$ is neutralizing if $A u_0 = k\,\mathbf 1$ with some $k\in\mathbb{R}$.
\end{definition}
Intuitively, when the controller plays a neutralizing strategy, the learner is made indifferent among its actions because all actions yield the same expected payoff. In the controllability analysis below, the existence of a fully mixed neutralizing strategy ensures that the projected control directions contain a neighborhood of the origin, which is essential for moving the learner’s state in all directions within the relative interior.

Now, let $H, P_H$ and $M$ be defined as in \eqref{eq:def-two-player-HPM}. Our first result can be stated as follows:
\begin{theorem}[Controllability of the two-player system]
    \label{thm:two-player}
    The system \eqref{eq:two-player-primal} is controllable on $\Delta_{n}^{\circ}$ if and only if
    \begin{enumerate}
        \item There exists a fully mixed neutralizing strategy $u_{0} \in \Delta_{m}^{\circ}$ for the controller, and
        \item $\rank(P_{H} A) = n-1$.
    \end{enumerate}
\end{theorem}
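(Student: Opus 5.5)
By Proposition~\ref{prop:state-equivalence}, it suffices to work with the projected dual system \eqref{eq:two-player-dual}, $\dot z = P_H A u$ on $M$ with $u \in \Delta_m$. The right-hand side does not depend on $z$, so every trajectory has the form $z(t) = z(0) + P_H A \int_0^t u(s)\,ds$. Consequently the set of velocities available in time $t$ is $t\,P_H A \Delta_m$, which is a convex polytope $K := P_H A\Delta_m$ scaled by $t$. The plan is to characterize controllability entirely through the geometry of $K$ inside $H$, which is $(n-1)$-dimensional. Throughout I will use two facts: $M$ is a nonempty open subset of $H$ (by Lemma~\ref{lem:diffeomorphism} it is diffeomorphic to $\Delta_n^\circ$), and it is connected.

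\textbf{Sufficiency.} Let $u_0 \in \Delta_m^\circ$ be a fully mixed neutralizing strategy. Then $P_H A u_0 = P_H k\mathbf 1 = 0$. Write $u = u_0 + v$ with $v \in V := \Delta_m - u_0$. The dynamics become the driftless control-affine system
\[
\dot z = \sum_{j} v_j\, P_H a_j ,
\]
where the vector fields $P_H a_j$ are constant. As noted after the definition of a proper control set, $V$ is proper in $\{v : \langle v,\mathbf 1\rangle=0\}$. The achievable directions are $P_H A v$ with $v$ ranging over that hyperplane. Because $P_H A u_0 = 0$ and $u_0 \notin \{v : \langle v,\mathbf 1\rangle=0\}$, this image equals $P_H A(\mathbb R^m)$, which is all of $H$ by condition 2. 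One way to conclude is to reparametrize with a basis of that hyperplane and apply Proposition~\ref{prop:chow-rashevskii}: the brackets vanish, but the constant fields already span $H = T_zM$, so the system is STLC. Proposition~\ref{prop:local-global-controllability} then gives global controllability on the connected set $M$. A more direct alternative is that $0 \in \operatorname{int}_H K$, so from any $z$ every nearby point is reachable along a straight line, and chaining such moves along a path in $M$ finishes the argument.

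\textbf{Necessity.} If $\rank(P_H A) < n-1$, then $P_H A\Delta_m$ lies in a proper subspace $L \subsetneq H$. Every trajectory then stays in $z(0) + L$, which cannot cover the open set $M$, so the system is not controllable.

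Now suppose no fully mixed neutralizing strategy exists. The main step is to show that $0 \notin \operatorname{relint}$ of $K$ taken in $H$. Assume the rank condition holds, so $K$ spans $H$ affinely in direction; otherwise we are already done. If $0 \in \operatorname{int}_H K$, then $0$ is a convex combination of all the vertices-generators $P_H a_j$ with strictly positive weights. The argument: $0$ is interior, so $-\epsilon\sum_j P_H a_j/m$ lies in $K$; add $\epsilon/m$ of each generator and renormalize. That would give $u \in \Delta_m^\circ$ with $P_H A u = 0$, meaning $Au \in \operatorname{span}\mathbf 1$, a contradiction. Hence $0 \notin \operatorname{int}_H K$.

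By the separating hyperplane theorem there is then $w \in H\setminus\{0\}$ with $\langle w, P_H A u\rangle \ge 0$ for all $u \in \Delta_m$. It follows that $\langle w, z(t)\rangle$ is nondecreasing along every trajectory. Take $z_0 \in M$ and a target $z_0 - \delta w \in M$ for small $\delta > 0$; such a point exists because $M$ is open in $H$. This target is unreachable from $z_0$, so the system is not controllable.

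The main obstacle is this convexity step, namely turning ``$0$ is an interior point of $K$'' into a strictly positive representation. It needs the careful perturbation argument above, and it must also handle the degenerate case where $0 \in K$ but lies only on its boundary. The separation argument covers that degenerate case as well.
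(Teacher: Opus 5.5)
Your proposal is correct and follows the paper's proof essentially step for step. You pass to the projected dual system, get necessity of condition 1 from a separating $w\in H\setminus\{0\}$ that makes $\langle w,z(t)\rangle$ nondecreasing, get necessity of condition 2 from the invariant affine set $z_0+\operatorname{im}(P_HA)$, and get sufficiency from Chow--Rashevskii with constant fields plus the local-to-global result. The differences are only refinements: your perturbation argument for $0\notin\operatorname{int}_H K$ replaces the paper's identity $P_HA(\Delta_m^\circ)=(P_HA\Delta_m)^\circ$, and your check that $P_HA(\mathbf 1^\perp)=P_HA(\mathbb R^m)$ makes explicit a step the paper leaves implicit. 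Your aside that the rank condition forces $K$ to span $H$ ``affinely in direction'' is false in general (it fails when the columns of $P_HA$ lie on an affine hyperplane of $H$ avoiding $0$), but nothing downstream uses it.
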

As discussed briefly in Section~\ref{sec:two-player-formulation}, our main proof technique is to transform system \eqref{eq:two-player-primal} into a state-equivalent system with a simpler form, i.e.
\begin{equation}
    \label{eq:two-player-auxiliary}
    z = P_{H} A u, \quad z \in P_{H}\nabla h(\Delta_{n}^{\circ}), \quad u \in U.
\end{equation}
The state-equivalence between the primal and dual systems is a consequence of Lemma~\ref{lem:diffeomorphism}, which we stated without proof. We now state the complete two-player version of this lemma as follows, and give a full proof of it.
\begin{lemma}
    \label{lem:two-player-diffeomorphism}
    The restriction of the choice map
\[
\Phi := Q|_{M} : M \to \Delta^\circ_n
\]
is a diffeomorphism with inverse
\[
\Phi^{-1}(x) = P_{H}\nabla h(x).
\]
\end{lemma}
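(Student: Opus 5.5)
The plan is to show that the map $\psi := P_H\nabla h|_{\Delta_n^\circ} : \Delta_n^\circ \to H$ is a diffeomorphism onto its image $M$, and that $Q|_M$ is its two-sided inverse. The identity $Q|_M = \psi^{-1}$ will come from the first-order optimality conditions of the program \eqref{eq:Q}. Smoothness of the inverse will come from the inverse function theorem, using positive definiteness of the Hessian in Assumption~\ref{asm:regularizer}.

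\emph{Step 1: $Q \circ \psi = \mathrm{id}$ on $\Delta_n^\circ$.}
Fix $x \in \Delta_n^\circ$ and put $z = P_H\nabla h(x)$. Then
\[
z - \nabla h(x) = -\tfrac{1}{n}\langle \nabla h(x),\mathbf 1\rangle\,\mathbf 1
\]
is a multiple of $\mathbf 1$. This is exactly the KKT stationarity condition at the interior point $x$ for maximizing $F_z(x') := \langle x', z\rangle - h(x')$ over $\Delta_n$, with no active inequality constraints. Since $\nabla^2 h \succ 0$ on $\Delta_n^\circ$, $F_z$ is strictly concave on the relative interior. I will extend concavity to all of $\Delta_n$ by continuity of $h$ (implicit in $h:\mathbb R^n\to\mathbb R$ being well defined). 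Stationarity at a relative-interior point of a concave function then makes $x$ a global maximizer over $\Delta_n$.

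For uniqueness, suppose $x''\neq x$ were another maximizer, possibly on the boundary. The open segment between $x$ and $x''$ lies in $\Delta_n^\circ$. Along it, $F_z$ is concave and strictly concave near $x$, so its midpoint would give a strictly larger value, which is a contradiction. Hence $Q(z)=x$. I expect this boundary and uniqueness detail to be the main obstacle, because Assumption~\ref{asm:regularizer} only controls $h$ on the relative interior. I would handle it exactly by this segment argument, restricting $F_z$ to the line through $x$ and $x''$.

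\emph{Step 2: $\psi\circ Q|_M = \mathrm{id}_M$.}
Any $z\in M$ can be written $z=\psi(x')$ with $x'\in\Delta_n^\circ$. By Step 1, $Q(z)=x'$, so $\psi(Q(z)) = \psi(x') = z$. Together with Step 1, this shows $\psi:\Delta_n^\circ\to M$ is a bijection with inverse $Q|_M$. It also shows $Q$ maps $M$ into $\Delta_n^\circ$.

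\emph{Step 3: smoothness and openness.}
The map $\psi$ is $C^\infty$ on $\Delta_n^\circ$, since $h\in C^\infty$ there. Its differential at $x$, acting on the tangent space $T_x\Delta_n^\circ = H$, is
\[
D\psi(x)v = P_H\nabla^2 h(x)v .
\]
For $0\neq v\in H$ we have $v^\top P_H = v^\top$, hence
\[
v^\top D\psi(x)v = v^\top\nabla^2h(x)v>0 .
\]
So $D\psi(x):H\to H$ is injective, and therefore a linear isomorphism between spaces of equal dimension $n-1$. By the inverse function theorem, $\psi$ is a local diffeomorphism, so $M=\psi(\Delta_n^\circ)$ is open in $H$ (and hence a manifold). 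Being a bijective local diffeomorphism, $\psi$ is a diffeomorphism onto $M$. Therefore $\Phi = Q|_M = \psi^{-1}$ is a diffeomorphism $M\to\Delta_n^\circ$ with $\Phi^{-1}(x) = P_H\nabla h(x)$. The $N$-learner statement of Lemma~\ref{lem:diffeomorphism} then follows componentwise.
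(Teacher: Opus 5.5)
Your proof is correct, but it runs in the opposite direction from the paper's. The paper works on the dual side. From the KKT conditions it identifies $Q^{-1}(\Delta_n^\circ)=\nabla h(\Delta_n^\circ)+\mathrm{span}\{\mathbf 1\}$. It gets bijectivity of $Q|_M$ from translation invariance together with $H\cap\mathrm{span}\{\mathbf 1\}=\{0\}$. It gets smoothness of $Q$ by applying the implicit function theorem to the KKT system $F(x,\lambda;y)=0$, whose bordered Jacobian $J$ is invertible.

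You instead work on the primal side. You treat the explicit mirror map $\psi=P_H\nabla h$ as the primary object and show via the inverse function theorem that it is a bijective local diffeomorphism. Your nondegeneracy of $P_H\nabla^2 h(x)$ on $H$ is exactly equivalent to invertibility of $J$. You then read off $Q|_M=\psi^{-1}$.

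Your route makes explicit two facts the paper leaves tacit:
\begin{itemize}
\item $M$ is open in $H$, hence an $(n-1)$-manifold with $T_zM=H$, which is used later in the proof of Theorem~\ref{thm:two-player};
\item the inverse $P_H\nabla h$ is smooth, which a diffeomorphism requires but which is not among the paper's two stated checks.
\end{itemize}
Your segment argument for uniqueness of the maximizer is also more careful than the paper's bare assertion of strict concavity on all of $\Delta_n$.

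The paper's route, in turn, yields smoothness of $Q$ on the full open set $\nabla h(\Delta_n^\circ)+\mathrm{span}\{\mathbf 1\}\subset\mathbb R^n$. That is what later gives meaning to $DQ(\nabla h(x))$ and the identity $DQ(z)P_H=DQ(z)$. You recover this immediately from $Q=\psi^{-1}\circ P_H$ on $M+\mathrm{span}\{\mathbf 1\}$.

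One small correction: continuity of $h$ on the closed simplex does not follow from $h$ being a real-valued function on $\mathbb R^n$. It is an additional, standard hypothesis on the regularizer. The paper also uses it implicitly when it asserts that $Q(y)$ exists, is unique, and is continuous by Berge's theorem.
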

\begin{proof}
    We first confirm that $Q$ is well defined and continuous. For fixed $y$, the function $x\mapsto\langle x,y\rangle-h(x)$ is strictly concave. Since $\Delta_n$ is compact and convex, the maximizer $Q(y)$ exists and is unique. Continuity of $Q$ as a function of $y$ follows from Berge’s maximum theorem~\cite{ok07}.

    To prove the lemma, it suffices to show that (i) $Q|_{M}: M \to \Delta_{n}^{\circ}$ is bijective, and (ii) $Q|_{M}$ is smooth.

    To show bijectivity, we examine the optimization problem whose solution is $Q(y)$:
    \begin{equation}
        \label{eq:optmization}
        \begin{aligned}
            \minimize_{z \in \Delta_{n}}\, &h(z) - \langle z, y \rangle \\
            s.t.\, &\langle \mathbf{1}, z \rangle - 1 = 0, \\
                   &z_{i} \ge 0.
        \end{aligned}
    \end{equation}
    The Lagrangian is
    \begin{equation*}
        \label{eq:Lagrangian}
        \cL(z, \lambda, \mu)
        =
        h(z) - \langle z, y \rangle + \lambda(\langle \mathbf{1}, z \rangle - 1) - \langle \mu, z \rangle,
    \end{equation*}
    where $\lambda \in \bbR$ and $\mu \in \bbR^{n}_{\ge 0}$. By the KKT conditions, the solution to \eqref{eq:optmization}, $x = Q(y)$, must satisfy
    \begin{equation*}
        \begin{cases}
            \nabla_{z} \cL(z, \lambda, \mu) = \nabla h(x) - y + \lambda \mathbf{1} - \mu = 0,\\
            \langle 1, x \rangle - 1 = 0,\\
            \mu_{i} \ge 0, \, \mu_{i} x_{i} = 0 \text{ for } i \in [n].\\
        \end{cases}
    \end{equation*}
    Since $x \in \Delta_{n}^{\circ}$, these conditions reduce to
    \begin{equation}
        \label{eq:KKT}
        \nabla h(x) - y + \lambda \mathbf{1} = 0,
    \end{equation}
    which is also sufficient for optimality. In fact, this also gives us a exact characterization of the preimage $Q^{-1}(\Delta_{n}^{\circ})$:
    \begin{equation}
        \label{eq:Q-preimage}
        Q^{-1}(\Delta_{n}^{\circ}) = \nabla h(\Delta_{n}^{\circ}) + \laspan\{\mathbf{1}\}.
    \end{equation}
    Further, since $\langle x, \mathbf{1} \rangle$ is constant over $\Delta_n$, we have $Q(y) = Q(y + k\mathbf{1})$ for any $y \in \bbR^{n}$ and $k \in \bbR$. This implies that
    \begin{equation*}
        Q(\nabla h(\Delta_{n}^{\circ}) + \laspan\{\mathbf{1}\}) = Q(P_{H} \nabla h(\Delta_{n}^{\circ})) = \Delta_{n}^{\circ},
    \end{equation*}
    so $Q|_{P_{H}\nabla h(\Delta_{n}^{\circ})}$ is surjective onto $\Delta_{n}^{\circ}$. To show injectivity, take any $y_{1}, y_{2} \in P_{H}\nabla h(\Delta_{n}^{\circ})$. If $Q(y_{1}) = Q(y_{2}) = x$, then \eqref{eq:KKT} yields
    \begin{equation*}
        y_{1} - y_{2} = (\lambda_{1} - \lambda_{2})\mathbf{1},
    \end{equation*}
    which can only be zero since $y_{1}, y_{2} \in H$. This establishes bijectivity.

    It remains to show smoothness. Define the map
    \begin{equation*}
        F(x, \lambda; y) := 
        \begin{bmatrix}
            \nabla h(x) - y + \lambda \mathbf{1} \\
            \langle 1, x \rangle - 1 
        \end{bmatrix}.
    \end{equation*}
    Since $h$ is smooth, $F$ is smooth, and $F(x, \lambda; y) = 0$ corresponds exactly to the KKT condition. The Jacobian of $F$ is
    \begin{equation*}
        J = \begin{bmatrix}
            \nabla^{2}h(x) &\mathbf{1}\\
            \mathbf{1}^{\T} &0
        \end{bmatrix},
    \end{equation*}
    which is invertible since $\nabla^{2}h$ is invertible. By the Implicit Function Theorem, for every $(x_{0}, \lambda_{0}, y_{0})$ with $F(x_{0}, \lambda_{0}; y_{0}) = 0$, there exist neighborhoods of $y_{0}$ and $(x_{0}, \lambda_{0})$ and a unique smooth mapping $(x(\cdot), \lambda(\cdot))$ such that $F(x(y), \lambda(y); y) = 0$ in that neighborhood. When $x \in \Delta_{n}^{\circ}$, this solution coincides with the unique maximizer of \eqref{eq:optmization}, i.e., $x(y)=Q(y)$. Since this holds for every $y \in Q^{-1}(\Delta_{n}^{\circ})$, we conclude that $Q$ is smooth on $Q^{-1}(\Delta_{n}^{\circ}) = \nabla h(\Delta_{n}^{\circ}) + \laspan\{\mathbf{1}\}$. Since $M = P_{H}\nabla h(\Delta_{n}^{\circ}) \subset \nabla h(\Delta_{n}^{\circ}) + \laspan\{\mathbf{1}\}$, the claim follows.
\end{proof}

We are now ready to prove Theorem~\ref{thm:two-player}.

    \begin{proof}[Proof of Theorem~\ref{thm:two-player}]
    As discussed in Section~\ref{sec:two-player-formulation}, the primal system \eqref{eq:two-player-primal} is state equivalent to the projected dual system \eqref{eq:two-player-dual}; it therefore suffices to analyze controllability on $M$.
    
    \noindent\emph{Step 1: Necessity of 1).}
    
    Assume that there is no fully mixed neutralizing strategy. Since $P_HA$ is linear and $\Delta_m^\circ$ is the relative interior of $\Delta_m$, we have $P_HA(\Delta_m^\circ)=\big(P_HA(\Delta_m)\big)^\circ$. 
    Therefore $0\notin \big(P_HA(\Delta_m)\big)^\circ$. By the separating hyperplane theorem, there exists a nonzero vector $w\in H$ such that $\langle w,P_HAu\rangle\ge 0$ for all $u\in\Delta_m$. 
    Let $z(\cdot)$ be any trajectory of \eqref{eq:two-player-dual}. Then
    $\frac{d}{dt}\langle w,z(t)\rangle
    =
    \langle w,P_HAu(t)\rangle
    \ge 0$, 
    so the scalar function $t\mapsto \langle w,z(t)\rangle$ is non-decreasing along every trajectory.
    
    Fix any $z_0\in M$. Since $w\in H=T_{z_0}M$, there exists a smooth curve
    $\gamma:(-\varepsilon,\varepsilon)\to M$ 
    such that $\gamma(0)=z_0$ and $\dot\gamma(0)=-w$. Consequently, 
    $\langle w,\gamma(s)\rangle<\langle w,z_0\rangle$ 
    for all sufficiently small $s>0$. Such points cannot be attained from $z_0$, which shows that \eqref{eq:two-player-dual} is not locally controllable, and hence not globally controllable. Thus 1) is necessary.

    Assume now that 1) holds.
    Choose $u_0\in\Delta_m^\circ$ such that $P_HAu_0=0$, and define
    $V:=\Delta_m-u_0$. 
    Since $u_0\in\Delta_m^\circ$, the set $V$ is a proper control set in the affine subspace
    $\operatorname{aff}(V)=\{v\in\mathbb R^m:\langle v,1_m\rangle=0\}$, 
    and $0\in V^\circ$. Writing $u=u_0+v$, the auxiliary system becomes
    \begin{equation}
        \label{eq:two-player-auxiliary-3}
        \dot z=P_HAv,\qquad z\in M,\quad v\in V.
    \end{equation}
    
    \noindent\emph{Step 2: Necessity of 2).}
    
    Let $z(\cdot)$ be any trajectory of \eqref{eq:two-player-auxiliary-3} with initial condition $z(0)=z_0$. Then for every $T\ge 0$,
    \[
    z(T)-z_0=\int_0^T P_HAv(t)\,dt\in \operatorname{im}(P_HA).
    \]
    Hence the attainable set from $z_0$ satisfies $A_z(z_0)\subset \big(z_0+\operatorname{im}(P_HA)\big)\cap M$. 
    If $\operatorname{rank}(P_HA)<n-1$, then $z_0+\operatorname{im}(P_HA)$ is a proper affine subspace of $H$. Since $M$ is an $(n-1)$-dimensional manifold, it cannot be contained in a proper affine subspace of $H$. Therefore $A_z(z_0)\neq M$, 
    so \eqref{eq:two-player-auxiliary-3} is not controllable. Thus 2) is necessary.
    
    \noindent\emph{Step 3: Sufficiency of 2).}
    
    Write the columns of $P_HA$ as $P_HA=[b_1\ \cdots\ b_m],\qquad b_j\in H$. 
    Then \eqref{eq:two-player-auxiliary-3} is the driftless control-affine system $\dot z=\sum_{j=1}^m v_j b_j$, with $v\in V$, 
    whose control vector fields are constant on $M$. Therefore all Lie brackets vanish, and for every $z\in M$,
    \[
    \operatorname{Lie}_z\{b_1,\dots,b_m\}
    =
    \operatorname{span}\{b_1,\dots,b_m\}
    =
    \operatorname{im}(P_HA).
    \]
    If $\operatorname{rank}(P_HA)=n-1$, then
    \[
    \operatorname{Lie}_z\{b_1,\dots,b_m\}
    =
    H
    =
    T_zM,
    \qquad \forall z\in M.
    \]
    Thus the family $\{b_1,\dots,b_m\}$ is bracket generating on $M$. Since $V$ is a proper control set, Proposition~\ref{prop:chow-rashevskii} implies that \eqref{eq:two-player-auxiliary-3} is small-time locally controllable everywhere on $M$. By Proposition~\ref{prop:local-global-controllability}, it follows that \eqref{eq:two-player-auxiliary-3} is globally controllable on the connected manifold $M$. Finally, Proposition~\ref{prop:state-equivalence} transfers controllability back to the original system, so the primal system is controllable on $\Delta_n^\circ$. 
    This completes the proof.
    \end{proof}

\subsection{The N-player case}
Moving beyond a single opponent, we now consider the case where the controller interacts with multiple independent learners at the same time. Steering a joint system is naturally more demanding, as any change in the controller's strategy simultaneously affects the entire population. Because finding a tight necessary and sufficient condition is difficult in this high-dimensional setting, we instead focus on providing reliable sufficient conditions. We outline two distinct approaches. The first generalizes our earlier intuition by asking if the controller can uniformly neutralize the drift for all learners at once. Since this can be a restrictive requirement in larger games, we also provide a complementary condition that relies on the periodicity of the system's drift.

We will start by laying out the first sufficient condition, which is based on the concept of a uniformly neutralizing strategy:
\begin{definition}[Uniformly Neutralizing Strategy]
    A mixed strategy $u_0\in\Delta_m$ is uniformly neutralizing if for every fully mixed strategy profile $x$, $A_{i}(x) u_0 = k_{i}\,\mathbf 1$ with some $k_{i} \in \bbR$ for every $i \in [N]$.
\end{definition}
This extends the two-player notion of a neutralizing strategy: a uniformly neutralizing strategy simultaneously neutralizes each learner’s payoff vector across all system states. As a result, it removes the drift term in the projected auxiliary system, reducing the controllability analysis to a driftless control-affine problem.

Let $H_i, H, P_H$ and $\cX^\circ$ be defined as in \eqref{eq:def-n-player-HPX},
let $Q$ and $h$ be defined as in \eqref{eq:def-n-player-Qh}, 
and let $M$ be defined as in \eqref{eq:def-n-player-M}.
Define the vector fields $\eta_1, \dots, \eta_{m-1} : \cX^\circ \to \cX^\circ$ by
\begin{equation}
    \label{eq:z-fields}
    \eta_i (x) = DQ(\nabla h(x))\left(a_i(x) - a_m(x) \right). 
\end{equation}
\begin{remark} 
\label{rmk:nabla-h}
    Since $\Phi^{-1}(x)=P_H\nabla h(x)$ and $Q$ is invariant under blockwise
    translations along $H^\perp$, we have
    \[
    DQ(\Phi^{-1}(x))=DQ(P_H \nabla h(x))= DQ(\nabla h(x)).
    \]
    Accordingly, in theorem statements we use
    \[
    \eta_i(x):=DQ(\nabla h(x))(a_i(x)-a_m(x)), \  i=1,\dots,m-1.
    \]
\end{remark}

For a family $\cG$ of smooth vector fields on $X^\circ$, define
\[
\lie_x(\cG):=\{Y(x):Y\in\lie(\cG)\}\subset T_xX^\circ.
\]
We say that $\lie(G)$ has full rank on $\cX^\circ$ if
$\lie_x(\cG)=T_x\cX^\circ$ for all $x\in \cX^\circ$, or, equivalently, $\dim \lie_x(\cG)=\dim(\cX^\circ)$ for all $x\in \cX^\circ$. 
Our sufficient condition for controllability of the $N$-learner system is as follows:
\begin{theorem}[Sufficient condition for controllability of the $N$-player system with a uniformly neutralizing strategy]
    \label{thm:N-player-controllability-1}
    The system \eqref{eq:n-player-primal} is controllable if 
    \begin{enumerate}
        \item There exists a fully mixed uniformly neutralizing strategy $u_{0} \in \Delta_{m}^{\circ}$ for the controller, and
        \item $\lie\{\eta_1,\ldots,\eta_{m-1}\}$ has full rank on $X^\circ$.
    \end{enumerate}
\end{theorem}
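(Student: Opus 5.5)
We mirror the sufficiency part of the proof of Theorem~\ref{thm:two-player}. Working in the primal coordinates \eqref{eq:n-player-primal}, we first use condition 1) to remove the drift. Fix the fully mixed uniformly neutralizing strategy $u_0\in\Delta_m^\circ$. Then $A_i(x_{-i})u_0\in\operatorname{span}\{\mathbf 1_{n_i}\}$ for every $x\in\cX^\circ$ and every $i$, so $P_H A(x)u_0=0$. Since $DQ(z)P_H=DQ(z)$, we get $DQ(\Phi^{-1}(x))A(x)u_0=0$ for all $x\in\cX^\circ$. Write $u=u_0+v$ with $v\in V:=\Delta_m-u_0$. As noted after the definition of a proper control set, $V$ is proper in the hyperplane $\{v:\langle v,\mathbf 1_m\rangle=0\}$ and $0\in V^\circ$. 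The system then reads $\dot x=DQ(\Phi^{-1}(x))A(x)v$, and by Remark~\ref{rmk:nabla-h} we may replace $\Phi^{-1}(x)$ by $\nabla h(x)$.

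Next we rewrite this as a driftless control-affine system with $m-1$ inputs. Every $v$ in the hyperplane satisfies $v_m=-\sum_{j<m}v_j$, so
\[
A(x)v=\sum_{j=1}^{m-1}v_j\bigl(a_j(x)-a_m(x)\bigr),
\]
and hence $\dot x=\sum_{j=1}^{m-1}v_j\,\eta_j(x)$. The coordinate map $v\mapsto(v_1,\dots,v_{m-1})$ is a linear isomorphism from the hyperplane onto $\mathbb R^{m-1}$. It sends $V$ to a convex compact set containing $0$ in its interior, which is again a proper control set. The fields $\eta_j$ are smooth on $\cX^\circ$: $Q$ is smooth on $Q^{-1}(\cX^\circ)$ by the blockwise version of Lemma~\ref{lem:two-player-diffeomorphism}, $\nabla h$ is smooth, and $A(x)$ is polynomial in $x$. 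They are tangent to $\cX^\circ$ because $\operatorname{im}DQ\subset H$. The state space $\cX^\circ$ is a connected manifold, being a product of open simplices.

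Condition 2) says exactly that $\{\eta_1,\dots,\eta_{m-1}\}$ is bracket generating on $\cX^\circ$. Proposition~\ref{prop:chow-rashevskii} then gives STLC at every point, hence local controllability. Proposition~\ref{prop:local-global-controllability} upgrades this to global controllability on $\cX^\circ$. If one prefers to argue in the dual coordinates \eqref{eq:n-player-dual}, Proposition~\ref{prop:state-equivalence} transfers the conclusion back.

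The main obstacle is the drift cancellation. The first step requires $P_HA(x)u_0=0$ at every state, not just at one point; this is exactly why uniform neutralization is needed, and it must be verified carefully using $DQ\,P_H=DQ$. A secondary point is checking that the change of control variable preserves properness, so that Chow--Rashevskii applies with constrained, bounded controls. Here we will note that the admissible controls contain a neighborhood of $0$ in $\mathbb R^{m-1}$.
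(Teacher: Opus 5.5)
Your proposal is correct and follows essentially the same argument as the paper. Uniform neutralization gives $P_HA(x)u_0=0$ at every state, and the reparametrization $u=u_0+Ew$ (your coordinates $(v_1,\dots,v_{m-1})$ are exactly $w$) yields a driftless system with the proper control set $W$; Chow--Rashevskii and Proposition~\ref{prop:local-global-controllability} then give global controllability. The only difference is that you apply Chow--Rashevskii directly to the primal fields $\eta_k$ on $\cX^\circ$, whereas the paper works with the dual fields $b_k(z)=P_H\bigl(a_k(\Phi(z))-a_m(\Phi(z))\bigr)$ on $M$ and transfers the rank condition via $\Phi_*b_k=\eta_k$ and state equivalence, a step your route makes unnecessary.
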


As in the two-player case, we prove this by analyzing the dual system granted by Lemma~\ref{lem:diffeomorphism}. We have proven the two-player version of Lemma~\ref{lem:diffeomorphism} in the previous section, now we state and prove the second part, i.e. the multi-player version.
\begin{lemma}
    \label{lem:n-player-diffeomorphism}
    The restriction of the choice map
\[
\Phi := Q|_M = \Phi_1 \oplus \cdots \oplus \Phi_N : M \to \cX^\circ
\]
is a diffeomorphism with inverse
\[
\Phi^{-1}(x) = P_H \nabla h(x).
\]
\end{lemma}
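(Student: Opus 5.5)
The plan is to reduce Lemma~\ref{lem:n-player-diffeomorphism} blockwise to Lemma~\ref{lem:two-player-diffeomorphism}, using the product structure of every object involved. By \eqref{eq:def-n-player-M}, $M = M_1 \times \cdots \times M_N$ with $M_i = P_{H_i}\nabla h_i(\Delta_{n_i}^\circ)$. By \eqref{eq:def-n-player-Qh}, $Q = Q_1 \oplus \cdots \oplus Q_N$ acts componentwise: the $i$-th block of $Q(y)$ depends only on $y_i$. Likewise, $P_H = \mathrm{diag}(P_{H_1},\ldots,P_{H_N})$ and $\nabla h = \nabla h_1 \oplus \cdots \oplus \nabla h_N$ are blockwise. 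The first step is therefore to record the identity
\[
Q|_M = Q_1|_{M_1} \times \cdots \times Q_N|_{M_N} = \Phi_1 \times \cdots \times \Phi_N ,
\]
which says that $\Phi$ is the Cartesian product of the single-learner maps.

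The second step applies Lemma~\ref{lem:two-player-diffeomorphism} to each learner separately. Each $h_i$ satisfies Assumption~\ref{asm:regularizer} on $\Delta_{n_i}^\circ$, so that lemma (with $n$ replaced by $n_i$, $H$ by $H_i$, and $P_H$ by $P_{H_i}$) shows that $\Phi_i : M_i \to \Delta_{n_i}^\circ$ is a diffeomorphism with inverse $x_i \mapsto P_{H_i}\nabla h_i(x_i)$.

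The third step uses the standard fact that a finite Cartesian product of diffeomorphisms is a diffeomorphism between the product manifolds. Concretely:
\begin{itemize}
\item bijectivity holds componentwise;
\item smoothness of $\Phi$ and of $\Phi^{-1}$ holds because each component map is smooth and depends only on its own block;
\item the Jacobian of $\Phi$ is block diagonal with invertible blocks $D\Phi_i$.
\end{itemize}
The inverse is $\Phi^{-1}(x) = \bigoplus_i P_{H_i}\nabla h_i(x_i) = P_H \nabla h(x)$ by the block definitions, and its target is $\cX^\circ$ by \eqref{eq:def-n-player-HPX}.

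The only point needing care is a minor one. We must check that $M$ and $\cX^\circ$ are genuinely smooth manifolds of matching dimension $\sum_i (n_i - 1)$, so that "diffeomorphism" is meaningful. This follows because each $M_i$ is the diffeomorphic image of the open subset $\Delta_{n_i}^\circ$ of the affine hull of $\Delta_{n_i}$, and finite products of manifolds are manifolds. I do not expect any real obstacle; the argument is essentially bookkeeping on top of Lemma~\ref{lem:two-player-diffeomorphism}.
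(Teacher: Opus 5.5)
Your proposal is correct and is essentially the paper's argument. The paper's proof runs the single-learner argument block by block: the KKT characterization of $Q_i^{-1}(\Delta_{n_i}^\circ)$, restriction to $H_i$, and the implicit function theorem. You do the same by invoking Lemma~\ref{lem:two-player-diffeomorphism} for each $\Phi_i$ and taking the Cartesian product, and your extra bookkeeping (block-diagonal Jacobian, $M$ open in $H$ of dimension $\sum_i (n_i-1)$) is left implicit in the paper and is harmless; that openness is most cleanly seen from $M_i = Q_i^{-1}(\Delta_{n_i}^\circ)\cap H_i$ rather than as a ``diffeomorphic image.''
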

\begin{proof}
Apply the single-learner argument to each block. The KKT conditions show that $Q_i^{-1}(\Delta_{n_i}^\circ)
=
\nabla h_i(\Delta_{n_i}^\circ) + \mathrm{span}\{\mathbf{1}_{n_i}\}$, 
and restriction to $H_i$ removes the nonuniqueness along the all-ones direction. Smoothness follows from the implicit function theorem exactly as in the one-learner case.
\end{proof}

Now we are ready to prove Theorem~\ref{thm:N-player-controllability-1}.
\begin{proof}[Proof of Theorem~\ref{thm:N-player-controllability-1}]\
Let $d:=\sum_{i=1}^N (n_i-1)=\sum_{i=1}^N n_i-N$. 
As in the two-player case, the primal system \eqref{eq:n-player-primal} and the dual system \eqref{eq:n-player-dual} are state-equivalent due to Lemma~\ref{lem:n-player-diffeomorphism}. We will focus on the analysis of the dual system for simplicity. Recall that the dual system is given by
\[
\Sigma_z:\quad
\dot z=P_HA(\Phi(z))u,\quad z\in M,\  u\in\Delta_m,
\]
and the primal dynamics is given by
\[
\Sigma_x:\quad
\dot x=DQ(\Phi^{-1}(x))A(x)u,\quad x\in \cX^\circ,\  u\in\Delta_m.
\]
Moreover, $\Sigma_z$ and $\Sigma_x$ are state equivalent via the diffeomorphism
$\Phi$. Indeed, for $x=\Phi(z)$,
\[
D\Phi(z)\big(P_HA(\Phi(z))u\big)
=
DQ(z)P_HA(x)u
=
DQ(z)A(x)u,
\]
where we again used $DQ(z)P_H=DQ(z)$.
Therefore, by Proposition~\ref{prop:state-equivalence}, controllability of $\Sigma_x$ on $\cX^\circ$ is equivalent
to controllability of $\Sigma_z$ on $M$.

\noindent\emph{Step 1: Neutralizing the drift and introducing effective controls.} 
By Condition 1, there exists $u_0\in\Delta_m^\circ$ such that for every
$x\in \cX^\circ$ and every learner $i$, $A_i(x)u_0=k_i(x)\mathbf 1_{n_i}$ 
for some scalar $k_i(x)\in\mathbb R$.
Applying $P_{H_i}$ to each block gives $P_{H_i}A_i(x)u_0=0$, 
and hence, after stacking the blocks, $P_HA(x)u_0=0$ for all $x\in \cX^\circ$.

Introduce the matrix
\[
E:=
\begin{bmatrix}
e_1-e_m & \cdots & e_{m-1}-e_m
\end{bmatrix}
\in\mathbb R^{m\times (m-1)}.
\]
Since $\operatorname{im}(E)=\{v\in\mathbb R^m:\langle v,\mathbf 1_m\rangle=0\}$,
every $u\in\Delta_m$ can be written uniquely as $u=u_0+Ew$ for some $w\in\mathbb R^{m-1}$.
Define $W:=\{w\in\mathbb R^{m-1}:u_0+Ew\in\Delta_m\}$. 
Because $u_0\in\Delta_m^\circ$, the translated simplex $\Delta_m-u_0$ is a proper
control set in $\{v:\langle v,\mathbf 1_m\rangle=0\}$, and since $E$ is a linear
isomorphism from $\mathbb R^{m-1}$ onto this hyperplane, $W$ is a proper control
set in $\mathbb R^{m-1}$.

Substituting $u=u_0+Ew$ into $\Sigma_z$ yields
\begin{equation*}
\begin{aligned}
    \dot z
&=
P_HA(\Phi(z))(u_0+Ew)\\
&=
P_HA(\Phi(z))Ew \\
&=
\sum_{k=1}^{m-1} w_k\, b_k(z),
\quad w\in W,
\end{aligned}  
\end{equation*}
where
\[
b_k(z):=P_H\big(a_k(\Phi(z))-a_m(\Phi(z))\big),
\quad k=1,\dots,m-1.
\]
Thus the projected dual system is a driftless control-affine system on $M$ with
control vector fields $b_1,\dots,b_{m-1}$.

\noindent\emph{Step 2: Push-forward of the control vector fields.} 
For $x=\Phi(z)$,
\[
(\Phi_*b_k)(x)
=
D\Phi(\Phi^{-1}(x))\,b_k(\Phi^{-1}(x)).
\]
Since $b_k(z)=P_H\bigl(a_k(\Phi(z))-a_m(\Phi(z))\bigr)$, 
we obtain
$(\Phi_*b_k)(x)
=
D\Phi(\Phi^{-1}(x))\,P_H\bigl(a_k(x)-a_m(x)\bigr)$. 
Using Remark~\ref{rmk:nabla-h} and the identity $DQ(\zeta)P_H=DQ(\zeta)$, this becomes
\[
(\Phi_*b_k)(x)
=
DQ(\nabla h(x))\bigl(a_k(x)-a_m(x)\bigr)
=
\eta_k(x).
\]
Therefore,
\[
\Phi_*\bigl(\lie\{b_1,\ldots,b_{m-1}\}\bigr)
=
\lie\{\eta_1,\ldots,\eta_{m-1}\}.
\]
Hence, for every $z\in M$ and $x=\Phi(z)$,
\[
D\Phi(z)\,\lie_z\{b_1,\ldots,b_{m-1}\}
=
\lie_x\{\eta_1,\ldots,\eta_{m-1}\}.
\]

\noindent\emph{Step 3: Wrapping up the proof.} 
By Condition 2), $\lie\{\eta_1,\ldots,\eta_{m-1}\}$ has full rank on $\cX^\circ$, i.e., $\lie_x\{\eta_1,\ldots,\eta_{m-1}\}=T_x\cX^\circ$ for all $x \in \cX^\circ$. 
Since $D\Phi(z):T_zM\to T_xX^\circ$ is a linear isomorphism, it follows that $\lie_z\{b_1,\ldots,b_{m-1}\}=T_zM$ for all $z \in M$. 
Thus the family $\{b_1,\ldots,b_{m-1}\}$ is bracket generating on $M$.

Since $W$ is a proper control set, Proposition~\ref{prop:chow-rashevskii} implies that the driftless dual
system is small-time locally controllable everywhere on $M$. The manifold $M$ is connected
because it is diffeomorphic to the connected manifold $\cX^\circ$. Hence, by
Proposition~\ref{prop:local-global-controllability}, the dual system is globally controllable on $M$.
Finally, Proposition~\ref{prop:state-equivalence} transfers controllability back through the diffeomorphism
$\Phi$, so the primal system \eqref{eq:n-player-primal} is controllable on $\cX^\circ$. 
This completes the proof.
\end{proof}

For a concrete example, please see Section~\ref{ex:BI}.

Intuitively, the first condition of Theorem~\ref{thm:N-player-controllability-1} ensures that the auxiliary system has no drift term. However, controllability is sometimes possible even with drift. We provide below another sufficient condition that captures one such case, which is due to~\cite[Chapter~4, Theorem~5]{jurdjevic96}, which we present here as Proposition~\ref{prop:recurrent}. This condition relies on periodicity.
\begin{definition}[Periodicity of Vector Fields]
    Let $f$ be a vector field on $M$. We say $f$ is periodic if $t \mapsto e^{tf}(x)$ is periodic for every $x \in M$.
\end{definition}

\begin{proposition}{\cite[Chapter~4, Theorem~5]{jurdjevic96}}
\label{prop:recurrent}
    Suppose that $X_0,\ldots,X_m$ are vector fields on $M$ that define a system affine in control. Assume that
    \begin{enumerate}
        \item[(a)] the drift $X_0$ is periodic, and
        \item[(b)] the origin of $\mathbb{R}^m$ lies in the interior of the convex hull of $U$.
    \end{enumerate}
    Then the corresponding control system is controllable provided that $\operatorname{Lie}_x\{X_0,\ldots,X_m\} = T_x M$ for each $x \in M$.
\end{proposition}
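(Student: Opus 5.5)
The plan is the Lie-saturate argument in the spirit of Jurdjevic--Kupka. Fix $q\in M$ and write $\mathcal F:=\{X_0+\sum_{i=1}^m u_iX_i : u\in U\}$, with attainable set $A(q)$. I will show that $\overline{A(q)}=M$ for every $q$, and then upgrade density to exact reachability using Proposition~\ref{prop:krener}.

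\emph{Step 1 (relaxation).} Replacing $U$ by $\operatorname{conv}U$ does not change $\overline{A(q)}$. This is the standard chattering argument: a trajectory driven by a convex combination of fields in $\mathcal F$ is a uniform limit of trajectories switching rapidly between those fields. By (b) we may therefore assume $0\in\operatorname{int}U$, so $U$ contains $\pm c\,e_i$ for some $c>0$ and each $i$, and also $u=0$. Hence the fields $X_0$ and $X_0\pm cX_i$ are all admissible.

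\emph{Step 2 (reversing the drift).} I claim $\overline{A(q)}$ is invariant under $e^{tX_0}$ for all $t\in\mathbb R$, not only $t\ge 0$. For any point $p$, periodicity gives $e^{-tX_0}(p)=e^{(kT_p-t)X_0}(p)$, where $T_p$ is the period of the orbit through $p$ and $k$ is large enough that $kT_p-t\ge 0$. So every point of $A(q)$ is mapped back into $A(q)$ by the backward flow.

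To pass to the closure, take $p\in\overline{A(q)}$ and $p_n\to p$ with $p_n\in A(q)$. I would use continuity of $e^{-tX_0}$ together with $e^{-tX_0}(p_n)\in A(q)$. This follows from the pointwise identity above applied at $p_n$, with $p_n$'s own period; it is legitimate because every orbit is periodic, so uniformity of the periods is never needed.

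\emph{Step 3 (the symmetric family).} Next I show $\overline{A(q)}$ is invariant under $e^{\pm sX_i}$ for $s\ge 0$. The closure is invariant under the flows of $X_0$, $-X_0$ and $X_0+cX_i$. By the Trotter-type formula
\[
e^{s(Y+Z)}=\lim_{n\to\infty}\bigl(e^{(s/n)Y}e^{(s/n)Z}\bigr)^n,
\]
invariance under positive-time flows passes to sums. Applied to $Y=X_0+cX_i$ and $Z=-X_0$, this gives invariance under $e^{scX_i}$; the same argument with $-c$ gives $e^{-scX_i}$. Thus $\overline{A(q)}$ is invariant under the symmetric family $\{\pm X_0,\pm X_1,\dots,\pm X_m\}$. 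This family generates the same Lie algebra as $\mathcal F$, which has full rank by hypothesis. By the orbit theorem (equivalently Proposition~\ref{prop:chow-rashevskii} applied to the symmetric system), its orbit through $q$ is all of the connected manifold $M$, and therefore $\overline{A(q)}=M$.

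\emph{Step 4 (density to exactness).} Fix a target $q'$. Consider the time-reversed system with fields $-\mathcal F$. Its drift $-X_0$ is still periodic and it has the same Lie rank, so Steps 1--3 apply to it as well. By Proposition~\ref{prop:krener} (the fields are analytic in our setting), the backward attainable set of $q'$ contains a nonempty open set $O$. By density, $A(q)$ meets $O$, and concatenating the two trajectories shows that $q'\in A(q)$.

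I expect Step 2 and the limit passage in Step 3 to be the main obstacle, since the closedness bookkeeping needs care when the periods $T_p$ vary and are unbounded. If the Trotter argument is delicate, the first fallback is to phrase Step 3 via the Lie saturate, the largest family whose attainable closures coincide with those of $\mathcal F$. That family is known to be a closed convex cone, so $(X_0+cX_i)+(-X_0)=cX_i$ belongs to it directly.
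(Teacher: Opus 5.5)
Your proof is correct. The paper gives no argument of its own for this proposition; it imports the result from Jurdjevic. The relevant comparison is therefore with the Lie-saturate argument behind that citation, which as usually presented runs as follows: convexification and positive scaling stay inside the saturate, a recurrent drift forces $-X_0$ into it, the cone property then yields $\pm X_1,\dots,\pm X_m$, and a saturate containing a symmetric bracket-generating family gives controllability.

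Your Steps 1--3 carry out this argument directly on $\overline{A(q)}$, with chattering, Lie--Trotter limits and the orbit theorem in place of the abstract cone calculus. Step 4 is the standard ``dense forward set meets open backward set'' upgrade. What your version buys is a self-contained proof. Exact periodicity also makes the drift reversal elementary, so the obstacle you anticipate in Step 2 is not one: invariance of $A(q)$ under $e^{-tX_0}$ is pointwise, and continuity of the complete flow $e^{-tX_0}$ carries it to the closure.

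Three small repairs are needed.
\begin{enumerate}
\item In Step 4, do not invoke Proposition~\ref{prop:krener} through analyticity. The proposition is stated for general vector fields, and in the paper's application the fields $\eta_k$ are only $C^\infty$, since Assumption~\ref{asm:regularizer} only asks $h\in C^\infty$. You only need the implication ``bracket generating at $q'$ implies the backward attainable set has nonempty interior,'' and Krener's theorem gives that for $C^\infty$ fields.
\item Rerunning Steps 1--3 for the reversed system is unnecessary. Only the Lie rank of $-\mathcal F$ at $q'$ is used there, and it is full because $X_0$ and each $X_i$ are linear combinations of elements of $\mathcal F$ when $0$ is interior to the convex hull of $U$.
\item The fields $X_i$ and $X_0+cX_i$ need not be complete on $M$; this fails, for instance, for FTRL with a Euclidean regularizer. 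The Trotter and orbit arguments should therefore be stated for flows wherever they are defined, which their standard local versions cover.
\end{enumerate}
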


Further, let $\bar u:=\frac1m\mathbf{1}_m$. 
Under the change of variables $u=\bar u+Ew$, the induced primal system
\eqref{eq:n-player-primal} can be written as
\begin{equation*}
\dot x=\eta_0(x)+\sum_{k=1}^{m-1} w_k\eta_k(x),
\qquad
w\in W,
\end{equation*}
where
\begin{equation}
\eta_0(x):=\frac1m DQ(\nabla h(x))A(x)\mathbf{1}_m.
\label{eq:eta0-def}
\end{equation}

Then our second sufficient condition can be stated as:
\begin{theorem}[Sufficient condition for controllability of the $N$-player system with a periodic drift]
    \label{thm:N-player-controllability-2}
    The system \eqref{eq:n-player-primal} is controllable if 
    \begin{enumerate}
        \item the vector field $\eta_0$ defined in \eqref{eq:eta0-def} is periodic on $\cX^\circ$, and
        \item $\lie\{\eta_1,\ldots,\eta_{m-1}\}$ has full rank on $X^\circ$.
    \end{enumerate}
    where $\eta_1, \dots, \eta_{m-1}$ are as defined in \eqref{eq:z-fields}.
\end{theorem}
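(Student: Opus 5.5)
The plan is to apply Proposition~\ref{prop:recurrent} directly to the primal system written in the form $\dot x=\eta_0(x)+\sum_{k=1}^{m-1}w_k\eta_k(x)$, $w\in W$, on the connected manifold $\cX^\circ$. No passage to dual coordinates is needed, although we may equivalently work on $M$ through Lemma~\ref{lem:n-player-diffeomorphism} and Proposition~\ref{prop:state-equivalence}.

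First, we verify the rewriting. Substituting $u=\bar u+Ew$ into \eqref{eq:n-player-primal} gives
\[
DQ(\Phi^{-1}(x))A(x)(\bar u+Ew)=\tfrac1m DQ(\nabla h(x))A(x)\mathbf 1_m+\sum_k w_k\,DQ(\nabla h(x))(a_k(x)-a_m(x)).
\]
Here we use Remark~\ref{rmk:nabla-h} to replace $\Phi^{-1}(x)$ by $\nabla h(x)$ inside $DQ$. This identifies the drift as $\eta_0$ and the control fields as $\eta_1,\dots,\eta_{m-1}$ from \eqref{eq:z-fields}. The map $w\mapsto \bar u+Ew$ is a bijection from $W:=\{w:\bar u+Ew\in\Delta_m\}$ onto $\Delta_m$, so admissible controls correspond exactly. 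Smoothness of the $\eta_k$ on $\cX^\circ$ follows from Lemma~\ref{lem:n-player-diffeomorphism}, which gives smoothness of $DQ$ on $M$, together with the polynomial dependence of $A(x)$ on $x$.

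Second, we check hypothesis (b) of Proposition~\ref{prop:recurrent}. Since $\bar u\in\Delta_m^\circ$ and $E$ is a linear isomorphism onto $\{v:\langle v,\mathbf 1_m\rangle=0\}$, the argument from the proof of Theorem~\ref{thm:N-player-controllability-1} shows that $W$ is convex and compact with $0$ in its interior in $\mathbb R^{m-1}$. Hence $0\in\operatorname{int}\operatorname{conv}(W)$. Hypothesis (a) is exactly Condition 1.

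Third, we check the rank condition. Since $\lie\{\eta_1,\dots,\eta_{m-1}\}\subseteq\lie\{\eta_0,\eta_1,\dots,\eta_{m-1}\}$, Condition 2 gives $\lie_x\{\eta_0,\dots,\eta_{m-1}\}=T_x\cX^\circ$ for every $x$. Proposition~\ref{prop:recurrent} then yields controllability.

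The main point requiring care is that Proposition~\ref{prop:recurrent} is applied with piecewise-constant controls taking values in $W$, and that periodic flows are complete on $\cX^\circ$. Periodicity of $\eta_0$ as defined presupposes that $e^{t\eta_0}(x)$ exists for all $t$, so completeness is implicit in Condition 1. The only remaining subtlety is to confirm that the class of controls used in Jurdjevic's theorem matches our definition of attainable sets, which uses piecewise-constant controls.
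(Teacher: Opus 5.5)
Your proposal is correct and matches the paper's proof step for step. Both use the affine reparametrization $u=\bar u+Ew$ to exhibit $\eta_0$ as drift and $\eta_1,\dots,\eta_{m-1}$ as control fields, note that $W$ is convex and compact with $0$ in its interior, use $\lie\{\eta_1,\ldots,\eta_{m-1}\}\subseteq\lie\{\eta_0,\ldots,\eta_{m-1}\}$ for the rank condition, and appeal directly to Proposition~\ref{prop:recurrent}. Your remarks on completeness of the periodic drift and on matching the admissible control class are sensible refinements that the paper leaves implicit; the paper only adds that periodicity implies Poisson stability.
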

\begin{proof}[Proof of Theorem~\ref{thm:N-player-controllability-2}]
Under the affine change of variables $u=\bar u+Ew$, the induced primal system
\eqref{eq:n-player-primal} becomes
$\dot x=\eta_0(x)+\sum_{k=1}^{m-1} w_k\eta_k(x)$, with $w\in W$. 
Since $\bar u=\frac1m\mathbf 1_m\in\Delta_m^\circ$ and $E:\mathbb{R}^{m-1}\to\{v\in\mathbb{R}^m:\langle v,\mathbf 1_m\rangle=0\}$ is a linear isomorphism, the set $W:=\{w\in\mathbb{R}^{m-1}:\bar u+Ew\in\Delta_m\}$ is convex, compact, and contains $0$ in its interior.

By Condition 2), the family $\{\eta_1,\ldots,\eta_{m-1}\}$ has full rank on $\cX^\circ$.
Hence the larger family $\{\eta_0,\eta_1,\ldots,\eta_{m-1}\}$ is bracket generating on $\cX^\circ$.

By Condition 1), every trajectory of the drift system $\dot x=\eta_0(x)$ is periodic; in particular,
$\eta_0$ is Poisson stable.

Therefore, we can invoke~\cite[Chapter~4, Theorem~5]{jurdjevic96}, a classical controllability theorem for control-affine systems with
Poisson-stable drift, bracket-generating vector fields, and control set whose convex hull is a neighborhood
of the origin, and conclude global controllability on $\cX^\circ$.

Since the change of variables $u=\bar u+Ew$ is bijective, the original induced primal system
\eqref{eq:n-player-primal} is controllable on $\cX^\circ$.
\end{proof}
\begin{remark}
    A sharper treatment for constrained controls with $0$ possibly on the boundary of the control set is available in~\cite{caillau25}, but we do not use it here.
\end{remark}

For a concrete example, please see Section~\ref{ex:RMP}.

\section{Examples}
\subsection{Examples for The two-player case}
\subsubsection{Rock-Paper-Scissor Game}
\label{exp:RPS}
A canonical steerable two player finite game is the Rock-Paper-Scissor (RPS) game, which is a zero sum game with the following payoff matrix:
\begin{equation*}
    A
    =
    \begin{bmatrix}
        \epsilon &-1 &1\\
        1 &\epsilon &-1\\
        -1 &1 &\epsilon
    \end{bmatrix}
\end{equation*}
with some $\epsilon \in (-1, 1)$. An obvious choice for a neutralizing strategy is $u_{0} = (1/3, 1/3, 1/3)$. Further, a straightforward computation yields
\begin{equation*}
    P_{H}A
    =
    \begin{bmatrix}
        \frac{2\epsilon}{3} &-1-\frac{\epsilon}{3} &1-\frac{\epsilon}{3}\\
        1-\frac{\epsilon}{3} &\frac{2\epsilon}{3} &-1-\frac{\epsilon}{3}\\
        -1-\frac{\epsilon}{3} &1-\frac{\epsilon}{3} &\frac{2\epsilon}{3}\\
    \end{bmatrix},
\end{equation*}
which has rank $2$ for $-1 \le \epsilon \le 1$. By Theorem~\ref{thm:two-player}, this game is steerable.
\subsubsection{Non-Steerable Case: Modified RPS}
We can break steerability with a small modification of the payoff matrix. Consider
\begin{equation*}
    A
    =
    \begin{bmatrix}
        0 &-1 &1\\
        1 &0 &-1\\
        -1 &1 &3
    \end{bmatrix}.
\end{equation*}
Despite
\begin{equation*}
    P_{H}A
    =
    \begin{bmatrix}
        0 &-1 &0\\
        1 &0 &-2\\
        -1 &1 &2
    \end{bmatrix}
\end{equation*}
which has rank $\larank(P_{H}A) = 2$, the only neutralizing strategy $u_{0} = (2/3, 0, 1/3)$ lies on the boundary of $\Delta_{3}$. Therefore, the game is not steerable. Figure~\ref{fig:attainable} illustrates the attainable set of this system from three random initial points, which does not cover $\Delta_{3}^{\circ}$.
\begin{figure}
    \center
    \resizebox{\columnwidth}{!}{%
        \includegraphics{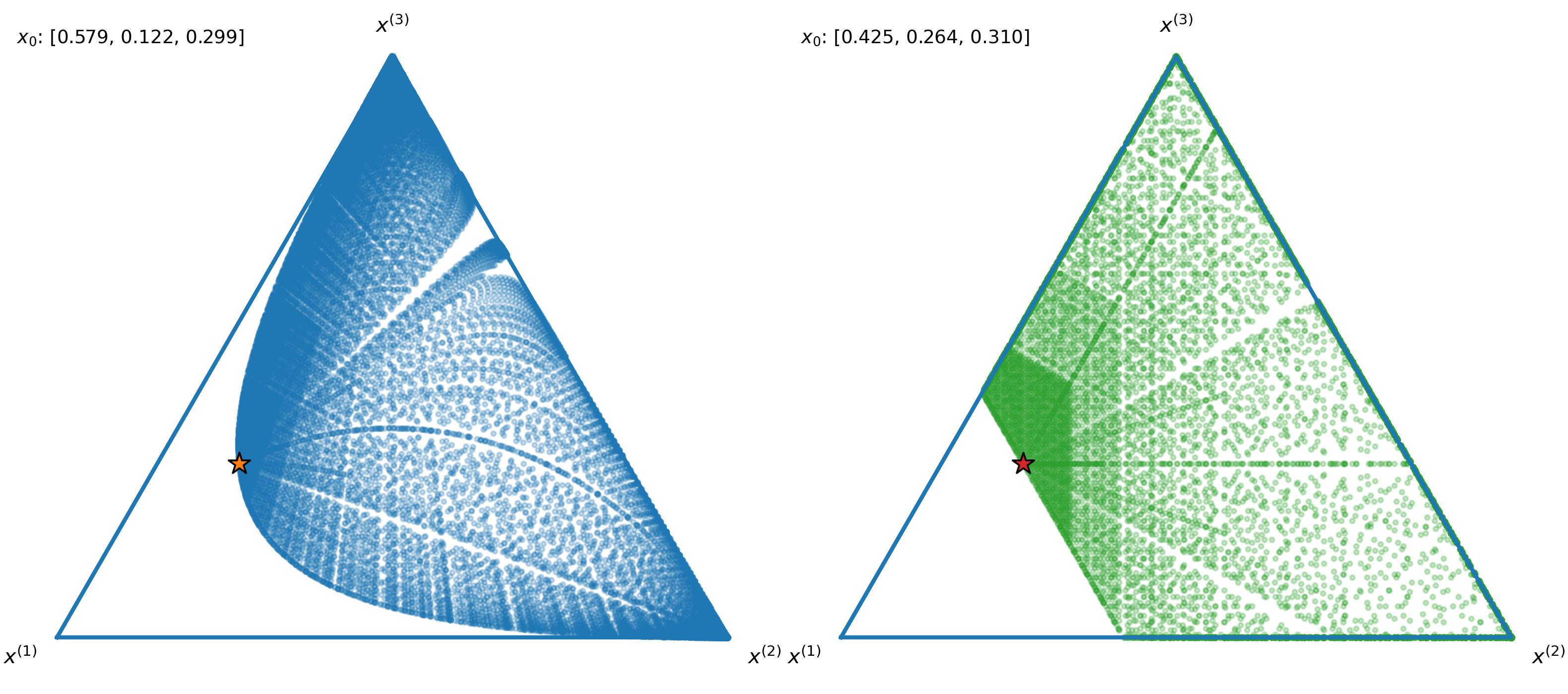}
    }
    \resizebox{\columnwidth}{!}{%
        \includegraphics{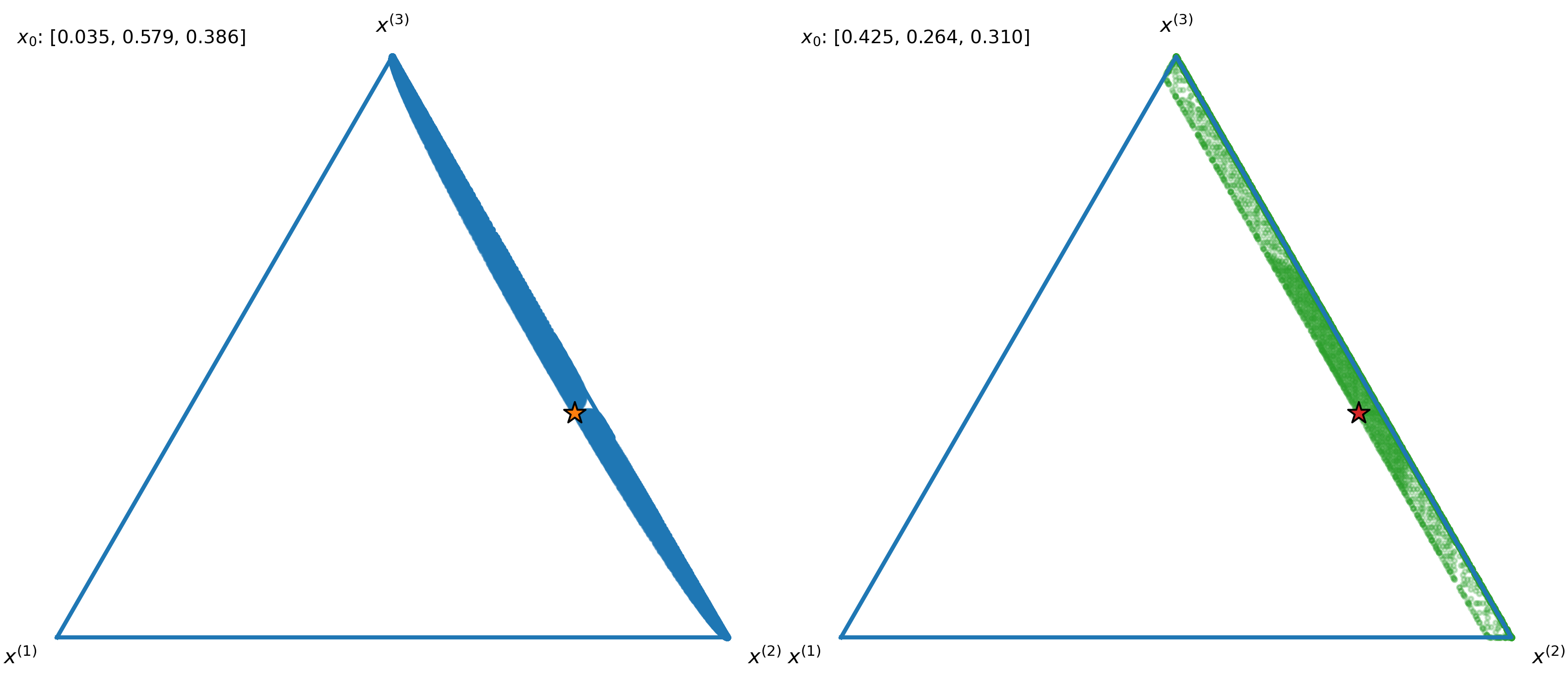}
    }
    \resizebox{\columnwidth}{!}{%
        \includegraphics{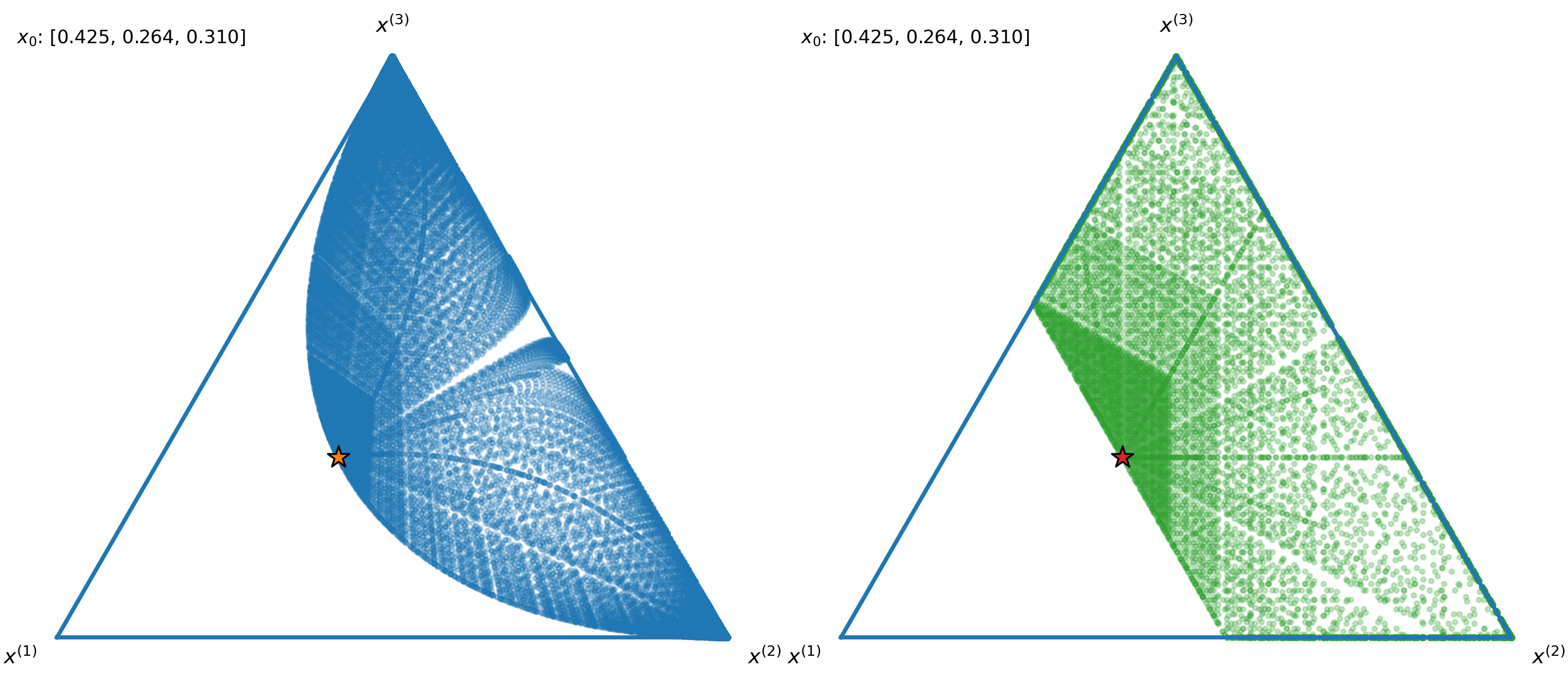}
    }
    \caption{Approximate attainable sets for the Modified RPS game: We plot the approximate attainable set for three random initial points in $\Delta_{3}^{\circ}$, for two different variants of FTRL. The figures on the left shows the case where the learner adopts the Replicator Dynamics, while the figures on the right shows that of the learner who adopts FTRL with $h(x) = 1/2\|x\|^{2}$. The reachable set is approximated by plotting the union of states generated by constant controls $u \in \Delta_3$ sampled on the simplex lattice $u=(i,j,k)/50$, $i+j+k=50$, and time horizons $t \in [0,12]$ sampled on a uniform grid of $45$ points.}
    \label{fig:attainable}
\end{figure}

\subsection{Examples for The multi-player case}
\subsubsection{Brockett's Integrator Game}
\label{ex:BI}
Consider the game with 3 learners and a controller. Each learner has 2 pure strategies, and the controller has 3 pure strategies. Learners 1 and 2 interact exclusively with the controller, receiving payoff depending on the first two and last two strategies of the controller, respectively. Learner 3 receives payoff depending on all three strategies of the controller, but the exact payoff it receives depends on the strategies of the other two learners. Figure~\ref{fig:BI} illustrates the structure of this game.
\begin{figure}[t]
    \centering
    \resizebox{\columnwidth}{!}{%
        \begin{tikzpicture}[
            >=Stealth,
            font=\small,
            node distance=2.2cm and 2.4cm,
            player/.style={draw, rounded corners, align=center, minimum width=3.0cm, minimum height=1.0cm},
            lab/.style={font=\scriptsize, inner sep=1pt}
            ]
            \node[player] (C)  {Controller\\ $u\in\Delta^3$};
            \node[player, below left=1.8cm and 2.4cm of C] (L1) {Learner 1\\ $x_1\in\Delta^2$\\$p_{1}(u^{(1)}, u^{(2)})$};
            \node[player, below=1.8cm of C] (L2) {Learner 3\\ $x_3\in\Delta^2$\\$p_{3}(x_{1}^{(1)}, x_{2}^{(1)}, u)$};
            \node[player, below right=1.8cm and 2.4cm of C] (L3) {Learner 2\\ $x_2\in\Delta^2$\\$p_{2}(u^{(2)}, u^{(3)})$};

            \draw[->] (C) -- node[lab, left]{$u^{(1)}, u^{(2)}$} (L1);
            \draw[->] (C) -- node[lab, right]{$u^{(1)}, u^{(2)}, u^{(3)}$} (L2);
            \draw[->] (C) -- node[lab, right]{$\quad u^{(2)}, u^{(3)}$} (L3);

            \draw[->] (L1) to[lab, left] node[lab, below]{$x_1^{(1)} \rightarrow r_2$} (L2);
            \draw[->] (L3) to[lab, right] node[lab, below]{$x_2^{(1)} \rightarrow r_2$} (L2);
    \end{tikzpicture} }
    \caption{Interdependency graph for the Brockett's Integrator game: This game shows the interdependency of players in the Brockett's Integrator game presented in Section~\ref{ex:BI}. As shown in the graph, learner 1's payoff only depend on the probability of the controller playing his first two strategies, and learner 2's payoff only s on the probability of the controller playing his last two strategies. Learner three's payoff depends on the probability of learner 1 and 2 playing their first strategies, as well as the entire mixed strategy of the controller.}
    \label{fig:BI}
\end{figure}
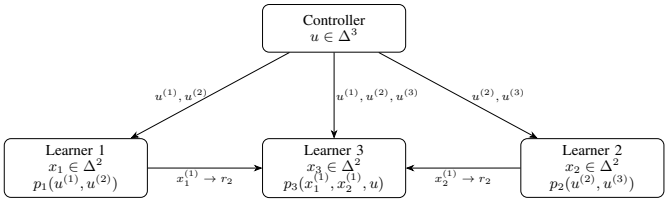
We can model this game in the form of \eqref{eq:n-player-raw-block} with
\begin{equation*} 
    \begin{cases} 
        A_{1}(x_{2}, x_{3})
        = 
        \begin{bmatrix} 
            1 & -1 & 0 \\ 
            -1 & 1 & 0 
        \end{bmatrix},\\ 
        A_{2}(x_{1}, x_{3})
        = 
        \begin{bmatrix} 
            0 & 1 & -1 \\ 
            0 & -1 & 1 
        \end{bmatrix}, \\ 
        A_{3}(x_1,x_2)
        = 
        \begin{bmatrix} 
            -x_{2}^{(1)} & x_{1}^{(1)}+x_{2}^{(1)}  & -x_{1}^{(1)} \\ 
            x_{2}^{(1)} & -(x_{1}^{(1)}+x_{2}^{(1)}) & x_{1}^{(1)} 
        \end{bmatrix},
    \end{cases} 
\end{equation*}
and
\begin{equation}
    \label{eq:BI-original}
    x = Q(y), \quad \dot{y} = A(x)u, \quad x \in \Delta_{2}\times \Delta_{2} \times \Delta_{2}, \, u \in \Delta_{3}
\end{equation}
with
\begin{equation*}
    A(x)
    =
    \begin{bmatrix}
        A_{1}(x_{1}, x_{2})\\
        A_{2}(x_{1}, x_{3})\\
        A_{3}(x_{2}, x_{2})
    \end{bmatrix}.
\end{equation*}
We now verify the two conditions of Theorem~\ref{thm:N-player-controllability-1} using the
vector fields
\begin{equation*}
    \begin{aligned}
        \eta_1(x)&=DQ(\nabla h(x))(a_1(x)-a_3(x)),\\
        \eta_2(x)&=DQ(\nabla h(x))(a_2(x)-a_3(x)).
    \end{aligned}
\end{equation*}
The first condition is easily satisfied, since $u_0=(1/3,1/3,1/3)\in\Delta_3^\circ$ 
is uniformly neutralizing.

To verify Condition~2, write
\[
x_i=\bigl(x_i^{(1)},x_i^{(2)}\bigr),\qquad
\xi_i:=\frac12\bigl(x_i^{(1)}-x_i^{(2)}\bigr),\quad i=1,2,3.
\]
Since each learner has two pure strategies, each tangent space
\[
H_i=\{(r,-r):r\in\mathbb R\}
\]
is one-dimensional. Hence, for each block there exists a smooth
positive scalar function $\lambda_i(\xi_i)>0$ such that
\[
P_{H_i}DQ_i(\nabla h_i(x_i))
\begin{bmatrix}
1\\-1
\end{bmatrix}
=
\lambda_i(\xi_i)
\begin{bmatrix}
1\\-1
\end{bmatrix}.
\]

From the matrices in the example, the three columns of $A(x)$ are
\begin{equation*}
    \begin{aligned}
        a_1(x)&=
\begin{bmatrix}
1 &-1&0&0&-x_2^{(1)}& x_2^{(1)}
\end{bmatrix}^\T,
\\
 a_2(x)&=
\begin{bmatrix}
-1&1&1&-1&x_1^{(1)}+x_2^{(1)}&-(x_1^{(1)}+x_2^{(1)})
\end{bmatrix}^\T,
\\
 a_3(x)&=
\begin{bmatrix}
0&0&-1&1&-x_1^{(1)}&x_1^{(1)}
\end{bmatrix}^\T.
    \end{aligned}
\end{equation*}
Therefore, in the local coordinates $\xi=(\xi_1,\xi_2,\xi_3)$,
the push-forward vector fields become
\[
\eta_1(\xi)=
\begin{bmatrix}
\lambda_1(\xi_1)\\[2mm]
\lambda_2(\xi_2)\\[2mm]
(\xi_1-\xi_2)\lambda_3(\xi_3)
\end{bmatrix},
\eta_2(\xi)=
\begin{bmatrix}
-\lambda_1(\xi_1)\\[2mm]
2\lambda_2(\xi_2)\\[2mm]
\left(2\xi_1+\xi_2+\frac32\right)\lambda_3(\xi_3)
\end{bmatrix}.
\]
A direct computation gives
\[
[\eta_1,\eta_2](\xi)=
\begin{bmatrix}
0\\[2mm]
0\\[2mm]
3\bigl(\lambda_1(\xi_1)+\lambda_2(\xi_2)\bigr)\lambda_3(\xi_3)
\end{bmatrix}.
\]
Since $\lambda_i(\xi_i)>0$ for all $i$, the vectors
$\eta_1(\xi)$, $\eta_2(\xi)$, and $[\eta_1,\eta_2](\xi)$ are
linearly independent for every $\xi\in(-1/2,1/2)^3$. Hence $\operatorname{rank}\bigl(\operatorname{Lie}\{\eta_1,\eta_2\}\bigr)=3
=\sum_{i=1}^3 n_i-3$. 
By Theorem~\ref{thm:N-player-controllability-1}, induced primal system and thus \eqref{eq:BI-original} is steerable on $\Delta_2^\circ \times \Delta_2^\circ \times \Delta_2^\circ$.

Interestingly, when all learners adopt FTRL with $h(x) = 1/2 \|x\|^{2}$, we can adopt the following change of variables:
\begin{equation*}
    \begin{cases}
        \xi_{1} = \frac{1}{2}(x_{1}^{(1)} - x_{1}^{(2)}),\\
        \xi_{2} = \frac{1}{2}(x_{2}^{(1)} - x_{2}^{(2)}),\\
        \xi_{3} = \frac{1}{2}(x_{3}^{(1)} - x_{3}^{(2)}),\\
        w_{1} = u_{1} - u_{2},\\
        w_{2} = u_{2} - u_{3}.
    \end{cases}
\end{equation*}
The resulting system is exactly the Brockett's Integrator:
\begin{equation*}
    \dot{\xi}
    =
    \begin{bmatrix}
        1 &0 \\
        0 &1 \\
        -\xi_{2} &\xi_{1}
    \end{bmatrix}
    w, \quad w \in W,
\end{equation*}
where the control set $W$ is the convex polygon
\begin{equation*}
    W=\left\{(v_1,v_2)\in\mathbb{R}^2:\;
        \begin{array}{l}
            1+2v_1+v_2\ge 0,\\[1pt]
            1-v_1+v_2\ge 0,\\[1pt]
            1-v_1-2v_2\ge 0
        \end{array}
    \right\}.
\end{equation*}
\subsubsection{Regulated Matching Pennies}
\label{ex:RMP}
Consider a three player game with two learners and one controller, in which the two learners each have two pure actions and the controller has three. For any fixed controller action, the two learners are involved in a two player zero-sum game, with a payoff matrix dependent on the controller's action. Specifically, the three actions of the controller correspond to the following payoff matrices:
\begin{equation*}
    B_{1} = 
    \begin{bmatrix}
        1 &1 \\
        0 &0
    \end{bmatrix},
    B_{2} = 
    \begin{bmatrix}
        2 &-5 \\
        -3 &2
    \end{bmatrix},
    B_{3} = 
    \begin{bmatrix}
        0 &1 \\
        0 &1
    \end{bmatrix}.
\end{equation*}
Figure~\ref{fig:RMP} illustrates the structure of this game. Denoting $x_{1} = (\alpha, 1-\alpha)$ and $x_{2} = (\beta, 1-\beta)$, we can model this game in the form of \eqref{eq:n-player-raw-block} with
\begin{equation*}
    \begin{aligned}
        A_{1}(x_{2}) 
    &=
    \begin{bmatrix}
        B_{1}x_{2}, B_{2}x_{2}, B_{3}x_{2}
    \end{bmatrix},\\
    A_{2}(x_{1}) 
    &=
    \begin{bmatrix}
        -B_{1}x_{1}, -B_{2}x_{1}, -B_{3}x_{1}
    \end{bmatrix}
    \end{aligned}
\end{equation*}
Suppose that the learning agents adopt the replicator dynamics. 
For $\bar u=(1/3,1/3,1/3)$, the drift field is
\[
\eta_0(x)=\frac13 DQ(\nabla h(x))A(x)\mathbf 1_3.
\]
Under replicator dynamics and the reduced coordinates $(\alpha,\beta)$, this becomes
\[
\eta_0(\alpha,\beta)=
\begin{bmatrix}
2\alpha(1-\alpha)(2\beta-1)\\
2\beta(1-\beta)(1-2\alpha)
\end{bmatrix},
\]
which is the classical matching-pennies replicator field and is periodic on $(0,1)^2$. This satisfies condition 1) of Theorem~\ref{thm:N-player-controllability-2}.

\begin{figure}[t]
    \centering
    \resizebox{\columnwidth}{!}{%
        \begin{tikzpicture}[
            >=Stealth,
            font=\footnotesize,
            node distance=1.15cm and 2.6cm,
            player/.style={
                draw, rounded corners, align=center,
                inner sep=3.5pt, minimum width=2.75cm, minimum height=0.95cm
            },
            block/.style={
                draw, rounded corners, align=center,
                inner sep=3.0pt, minimum width=3.2cm, minimum height=0.85cm
            },
            lab/.style={font=\scriptsize, fill=white, inner sep=1.0pt},
            ctrl/.style={->, dashed},
            dep/.style={->}
            ]

            \node[player] (C) {\textbf{Controller}\\ $u\in\Delta^3$};
            \node[block, below=0.95cm of C] (M) {Matrix selection\\ {\scriptsize $B^{k},\ k\in\{1,2,3\}$}};

            \node[player, below left=1.10cm and 2.0cm of M] (L1) {\textbf{Learner 1}\\ $x_1\in\Delta^2$ \\$p_{1} = B_{k}x_{2}$};
            \node[player, below right=1.10cm and 2.0cm of M] (L2) {\textbf{Learner 2}\\ $x_2\in\Delta^2$ \\$p_{2} = -B_{k}^{\T}x_{1}$};

            \draw[ctrl] (C) -- node[lab, right] {$u$} (M);

            \draw[ctrl] (M) -- node[lab, sloped, above] {$B^{k}$} (L1);
            \draw[ctrl] (M) -- node[lab, sloped, above] {$B^{k}$} (L2);

            \draw[dep] (L2) to[bend left=10] node[lab, above] {$x_2$} (L1);
            \draw[dep] (L1) to[bend left=10] node[lab, below] {$x_1$} (L2);

        \end{tikzpicture}
    }
    \caption{Interdependency graph for the Regulated Matching Pennies game: This figure shows the interdependency among players in the Regulated Matching Pennies game presented in Section~\ref{ex:RMP}. The two learners are involved in a two-player zero-sum game with a payoff matrix chosen by the controller. Again, we use $u \in \Delta_{3}$ to represents the controller's mixed strategy, $x_{i} \in \Delta_{2}$  represents the mixed strategy of learner $i$, and $p_{i}$ represents the payoff vector of learner $i$.}
    \label{fig:RMP}
\end{figure}
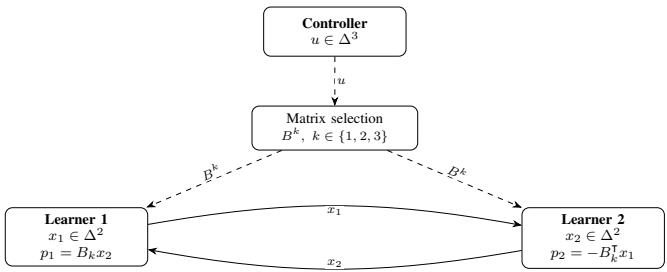

We now verify Condition~2). Taking again
\begin{equation*}
    \begin{aligned}
        \eta_1(x)&=DQ(\nabla h(x))(a_1(x)-a_3(x)),
\\
\eta_2(x)&=DQ(\nabla h(x))(a_2(x)-a_3(x)),
    \end{aligned}
\end{equation*}
where $a_1,a_2,a_3$ are the columns of $A(x)$. A direct
computation gives
\[
a_1(x)=
\begin{bmatrix}
1\\
0\\
-\alpha\\
-\alpha
\end{bmatrix},
a_2(x)=
\begin{bmatrix}
7\beta-5\\
2-5\beta\\
3-5\alpha\\
7\alpha-2
\end{bmatrix},
a_3(x)=
\begin{bmatrix}
1-\beta\\
1-\beta\\
0\\
-1
\end{bmatrix}.
\]
For replicator dynamics, in the reduced coordinates
$(\alpha,\beta)$ these become
\[
\eta_1(\alpha,\beta)=
\begin{bmatrix}
\alpha(1-\alpha)\\[1mm]
-\beta(1-\beta)
\end{bmatrix},
\eta_2(\alpha,\beta)=
\begin{bmatrix}
(12\beta-7)\alpha(1-\alpha)\\[1mm]
(4-12\alpha)\beta(1-\beta)
\end{bmatrix}.
\]
Their Lie bracket is
\[
[\eta_1,\eta_2](\alpha,\beta)
=
-12\alpha\beta(1-\alpha)(1-\beta)
\begin{bmatrix}
1\\
1
\end{bmatrix}.
\]
Therefore,
\begin{equation*}
    \begin{aligned}
        &\det\bigl[\eta_1(\alpha,\beta),[\eta_1,\eta_2](\alpha,\beta)\bigr] \\
=&
-12\alpha\beta(1-\alpha)(1-\beta)
\Bigl(\alpha(1-\alpha)+\beta(1-\beta)\Bigr).
    \end{aligned}
\end{equation*}
Since $(\alpha,\beta)\in(0,1)^2$, the right-hand side is never
zero. Thus
\[
\operatorname{rank}\bigl(\operatorname{Lie}\{\eta_1,\eta_2\}\bigr)=2
=
(2-1)+(2-1)
\]
for every $(\alpha,\beta)\in \Delta_2^\circ\times\Delta_2^\circ$.
Hence Theorem~\ref{thm:N-player-controllability-2} applies, and the induced primal system
\eqref{eq:n-player-primal} is controllable on
$\Delta_2^\circ \times \Delta_2^\circ$.

\section{Conclusion and Further Discussions}
In this paper, we show that steering continuous-time FTRL learners in finite games can be understood as a control problem on the relative interior of a simplex, or a product of simplices. From this perspective, the issue is not only how learners adapt, but when a model-aware player can guide that adaptation to a desired interior strategy configuration through standard strategic play alone, without changing the game’s payoff structure. In the two-player case, we characterize this exactly through the existence of a fully mixed neutralizing strategy and a rank condition on the projected payoff map. In settings with multiple learners, we identify two sufficient routes to controllability: one based on uniform neutralization, and one based on periodic drift together with a Lie-algebra rank condition.

These results followed from the insight that steering FTRL dynamics has the structure of a geometric control problem, allowing us to bring ideas of controllability into the analysis of vulnerabilities in strategic learning and identify how the geometry of the game determines the influence an agent can have on the overall learning behavior.

\printbibliography

\begin{IEEEbiography}[{\includegraphics[width=1in, height=1.25in, clip, keepaspectratio]{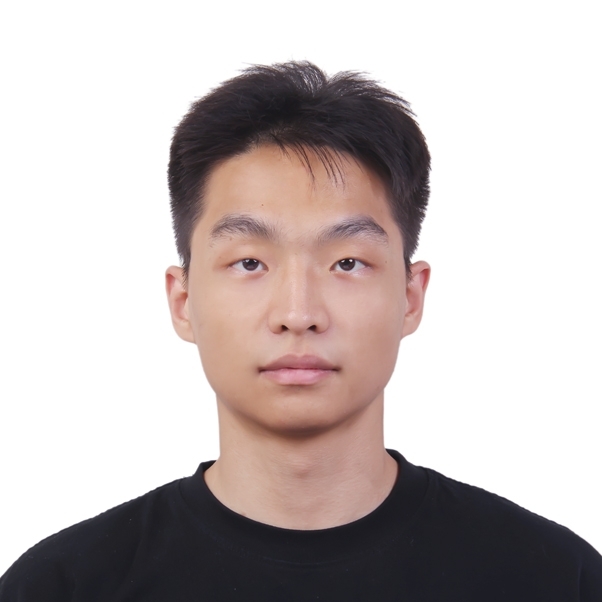}}]{Heling Zhang} (Graduate Student Member, IEEE) received his B.S. and M.S. degrees in Electrical and Computer Engineering from the University of Illinois at Urbana-Champaign, Urbana, IL, USA. He is currently pursuing a Ph.D. degree in Electrical and Computer Engineering at the same institution. His research interests include control theory and machine learning.
\end{IEEEbiography}

\begin{IEEEbiography}[{\includegraphics[width=1in, height=1.25in, clip, keepaspectratio]{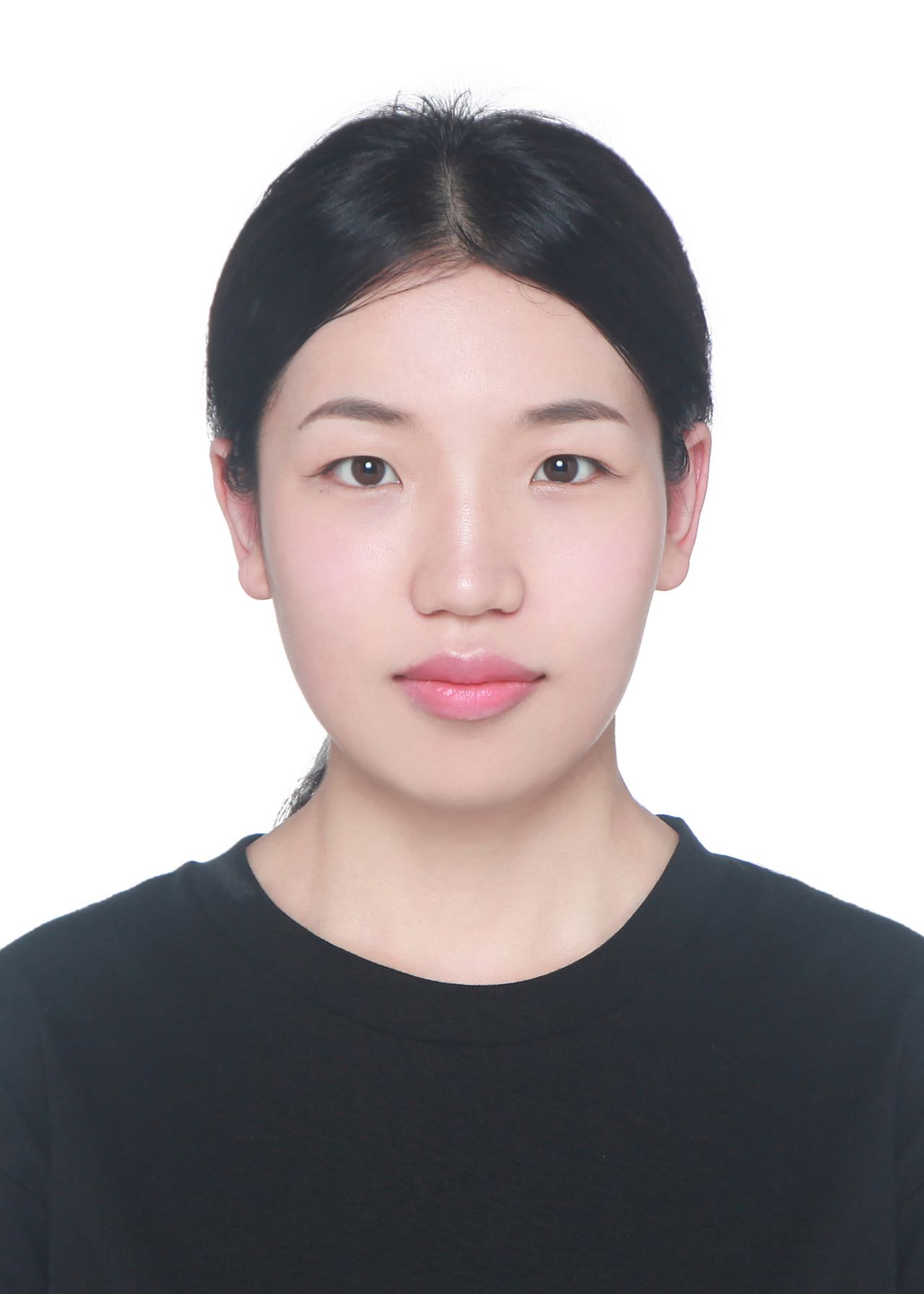}}]{Siqi Du} (Graduate Student Member, IEEE) received the B.S. degree in Management Science from Sichuan University and the M.S. degree in Industrial Engineering from the University of Illinois at Urbana-Champaign. She is currently pursuing the Ph.D. degree in the Department of Industrial and Enterprise Systems Engineering at UIUC. Her research interests include operations research and control systems.
\end{IEEEbiography}

\begin{IEEEbiography}[{\includegraphics[width=1in, height=1.25in, clip, keepaspectratio]{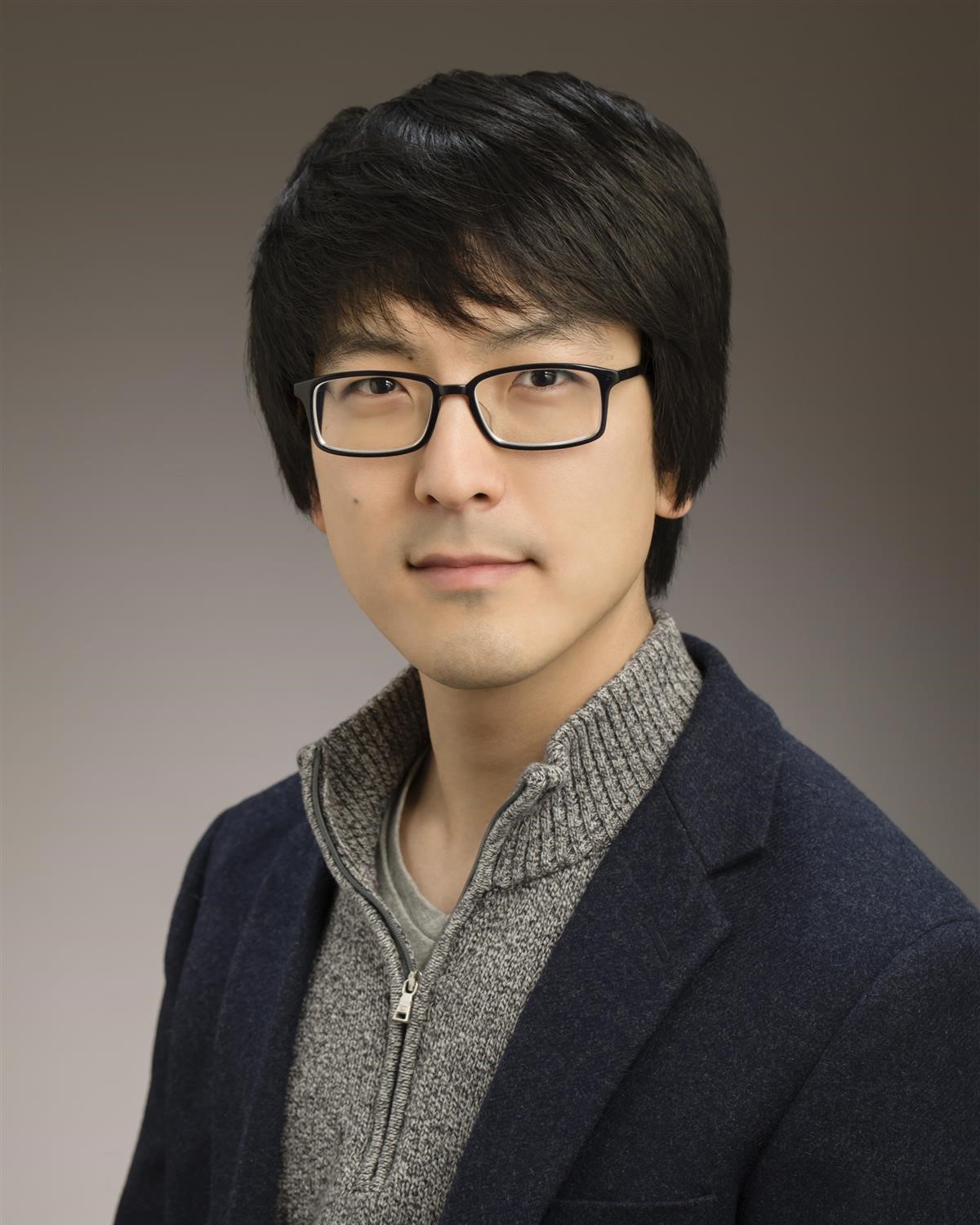}}]{Roy Dong}
(Member, IEEE)
is an Assistant Professor in the Industrial \& Enterprise Systems Engineering department at the University of Illinois at Urbana-Champaign. He received a BS Honors in Computer Engineering and a BS Honors in Economics from Michigan State University in 2010. He received a PhD in Electrical Engineering and Computer Sciences at the University of California, Berkeley in 2017, where he was funded in part by the NSF Graduate Research Fellowship. Prior to his current position, he was a postdoctoral researcher in the Berkeley Energy \& Climate Institute, a visiting lecturer in the Industrial Engineering and Operations Research department at UC Berkeley, and a Research Assistant Professor in the Electrical and Computer Engineering department at the University of Illinois at Urbana-Champaign. His research uses tools from control theory, economics, statistics, and optimization to understand the closed-loop effects of machine learning, with applications in cyber-physical systems such as the smart grid, modern transportation networks, and autonomous vehicles.
\end{IEEEbiography}

\end{document}